\theoremstyle{plain}
\newtheorem{theorem}{Theorem}[section]
\newtheorem{lemma}[theorem]{Lemma}
\newtheorem{condition}[theorem]{Condition}
\theoremstyle{remark}
\newtheorem{definition}[theorem]{Definition}
\newtheorem{remark}[theorem]{Remark}
\newtheorem{example}{Example}
\newtheorem*{example*}{Motivating Example}
\newcommand{\R}{\mathbb R}
\newcommand{\w}{\mathbf{w}}
\newcommand{\va}{\boldsymbol{a}}
\newcommand{\one}{\boldsymbol{1}}
\newcommand{\vb}{\mathbf{b}}
\newcommand{\E}{\text{E}}
\newcommand{\vectorize}{ \text{vec} }
\newcommand{\var}{\text{var}}
\newcommand{\f}{f^0}
\newcommand{\sH}{\mathcal{H}}
\newcommand{\tk}{t^{(k)}}
\newcommand{\sS}{\mathcal{S}}
\newcommand{\half}{\frac{1}{2}}
\newcommand{\Sn}{\Sigma_n}
\newcommand{\Sninvhalf}{\Sigma_n^{-\frac{1}{2}}}
\newcommand{\St}{\Sigma}
\newcommand{\Sinv}{\Sigma^{-1}}
\newcommand{\Shalf}{\Sigma^{\frac{1}{2}}}
\newcommand{\Sinvhalf}{\Sigma^{-\frac{1}{2}}}
\newcommand{\ldn}{\Lambda_n}
\newcommand{\ld}{\Lambda}
\newcommand{\Nk}{\mathbf{N}_k}
\newcommand{\N}{\mathbf{N}}
\newcommand{\matN}{\mathbb N}
\newcommand{\matZ}{\mathbb Z}
\newcommand{\vv}{\mathbf{v}}
\def\red#1{\textcolor{black}{#1}}
\begin{document}

\begin{frontmatter}
\title{B-scaling: A Novel Nonparametric Data Fusion Method}
%\title{A sample article title with some additional note\thanksref{t1}}
\runtitle{B-scaling: A Novel Nonparametric Data Fusion Method}
%\thankstext{T1}{A sample additional note to the title.}

\begin{aug}
%%%%%%%%%%%%%%%%%%%%%%%%%%%%%%%%%%%%%%%%%%%%%%%
%% Only one address is permitted per author. %%
%% Only division, organization and e-mail is %%
%% included in the address.                  %%
%% Additional information can be included in %%
%% the Acknowledgments section if necessary. %%
%%%%%%%%%%%%%%%%%%%%%%%%%%%%%%%%%%%%%%%%%%%%%%%
\author[A]{\fnms{Yiwen}    \snm{Liu} \ead[label=e1,mark]{yiwenliu@arizona.edu}},
\author[A]{\fnms{Xiaoxiao} \snm{Sun} \ead[label=e2,mark]{xiaosun@arizona.edu}},
\author[B]{\fnms{Wenxuan}  \snm{Zhong}\ead[label=e3,mark]{wenxuan@uga.edu}}
\and
\author[C]{\fnms{Bing} \snm{Li}\ead[label=e4]{bing@stat.psu.edu}}
%%%%%%%%%%%%%%%%%%%%%%%%%%%%%%%%%%%%%%%%%%%%%%
%% Addresses                                %%
%%%%%%%%%%%%%%%%%%%%%%%%%%%%%%%%%%%%%%%%%%%%%%
\address[A]{Department of Epidemiology and Biostatistics,
University of Arizona,
\printead{e1,e2}}

\address[B]{Department of Statistics,
University of Georgia,
\printead{e3}}

\address[C]{Department of Statistics,
Pennsylvania State University,
\printead{e4}}
\end{aug}

% \if1\blind{ 
%     \title{B-scaling: A Novel Nonparametric Data Fusion Method}
%     \author{Yiwen Liu, Xiaoxiao Sun, Wenxuan Zhong\footnote{
%       Yiwen Liu is Assistant Professor of Practice, Department of Epidemiology and Biostatistics, University of Arizona, Tucson, AZ, 85724
%       (yiwenliu@email.arizona.edu); 
%       Xiaoxiao Sun is Assistant
%       Professor, Department of Epidemiology and Biostatistics, University of Arizona, Tucson, AZ, 85724 (xiaosun@email.arizona.edu); Wenxuan Zhong is Professor, Department of Statistics, University of
%       Georgia, Athens, GA, 30602 (wenxuan@uga.edu); and Bing Li is Professor, Department of Statistics, Pennsylvania State University, State College, PA, 16802 (bing@stat.psu.edu)
%       Correspondence should be addressed to Dr. Wenxuan Zhong.
%       Zhong’s research was supported by U.S. National Science Foundation under grants DMS-1440037, DMS-1440038, and DMS-1438957 and by U.S. National Institute of Health under grants R01GM122080 and R01GM113242. Li's research was supported by U.S. National Science Foundation under grant DMS-1713078.
%       The authors are grateful to the co-Editor, the associate editor, and the referees for their constructive comments and suggestions.
%     }, and Bing Li}
%   \maketitle }\fi

% \maketitle
% %--------------------------------------------------

% %----------		----------
% %--------------------------------------------------
% \doublespacing

% \newpage

%-------------------------------
%--------------------------------------------------

\begin{abstract}
Very often for the same scientific question, there may exist different techniques or experiments that measure the same numerical quantity. 
Historically, various methods have been developed to exploit the information within each type of data independently. However, statistical data fusion methods that could effectively integrate multi-source data under a unified framework are lacking. 
In this paper, we propose a novel data fusion method, called B-scaling, for integrating multi-source data. Consider $K$ measurements that are generated from different sources but measure the same latent variable through some linear or nonlinear ways. We seek to find a representation of the latent variable, named B-mean, which captures the common information contained in the $K$ measurements while takes into account the nonlinear mappings between them and the latent variable.
We also establish the asymptotic property of the B-mean and apply the proposed method to integrate multiple histone modifications and DNA methylation levels for characterizing epigenomic landscape. Both numerical and empirical studies show that B-scaling is a powerful data fusion method with broad applications.
\end{abstract}

\begin{keyword}
\kwd{Data Fusion}
\kwd{Multi-source Data}
\kwd{Generalized Eigenvalue Problem}
\kwd{Epigenetics}
\end{keyword}

\end{frontmatter}

%--------------------------------------------------
\section{Introduction}

Both the amount and variety of data have been increasing dramatically in recent years from all fields of science, such as genomics, chemistry, geophysics, and engineering. Even for the same scientific question, there may exist different techniques or experiments that measure the same numerical quantity. Historically, various methods have been developed to exploit the information within each type of data independently. However, the development of methods that could effectively integrate multi-source data under a unified framework is lagging behind \citep{ritchie2015methods}. In contrast, such multi-source data are abundant in practice. One motivating example is the Roadmap Epigenomics Project \citep{bernstein2010nih}.

\begin{example*}[Epigenetic Data from the Roadmap Epigenomics Project]

As an emerging science, epigenomics is the study of the complete set of epigenetic modifications that regulate gene expressions \citep{egger2004epigenetics,bird2007perceptions,esteller2008epigenetics}.
%As we know, the primary sequence of the human genome is largely preserved in all cell types, due to the fact that all the cells in human body arise from a single fertilized egg \citep{romanoski_epigenomics:_2015}. However, the epigenomic landscape of each cell can vary considerably, contributing to the difference in gene expression and biological function of each cell type.
%To gain insights into the fundamental aspects of epigenetics, the Roadmap Epigenomics Project has established a public resource of human epigenomic data with the goal of understanding how the epigenome affects gene expression and how it changes during disease. 
With advanced sequencing technology, the Roadmap Epigenomics Project has provided a large amount of epigenomic data of different types and structures to characterize the epigenomic landscapes of primary human tissues and cells under different conditions or diseases \citep{romanoski_epigenomics:_2015,kundaje_integrative_2015,amin2015epigenomic}. Chromatin immunoprecipitation (ChIP) and bisulfite treatment, for example, are two different techniques. They are adopted to generate datasets of various types of histone modification levels and DNA methylation levels (Figure \ref{fig:epiindex}), all of which are indicators of epigenetic activeness. Despite the availability of large amounts of data, there is still a lack of systematic understanding of how epigenomic variations across genome relate to gene expression changes \citep{ritchie2015methods, gomez2014data, hamid2009data}. To bridge this gap is the direct motivation of our data fusion method. We illustrated in Figure \ref{fig:epiindex} the application of data fusion for epigenetic data. The goal is to fuse multiple histone modifications and DNA methylation levels and develop an epigenetic index for characterizing epigenetic landscape.
\begin{figure}[H]
  \centering
  \includegraphics[width=.8\textwidth]{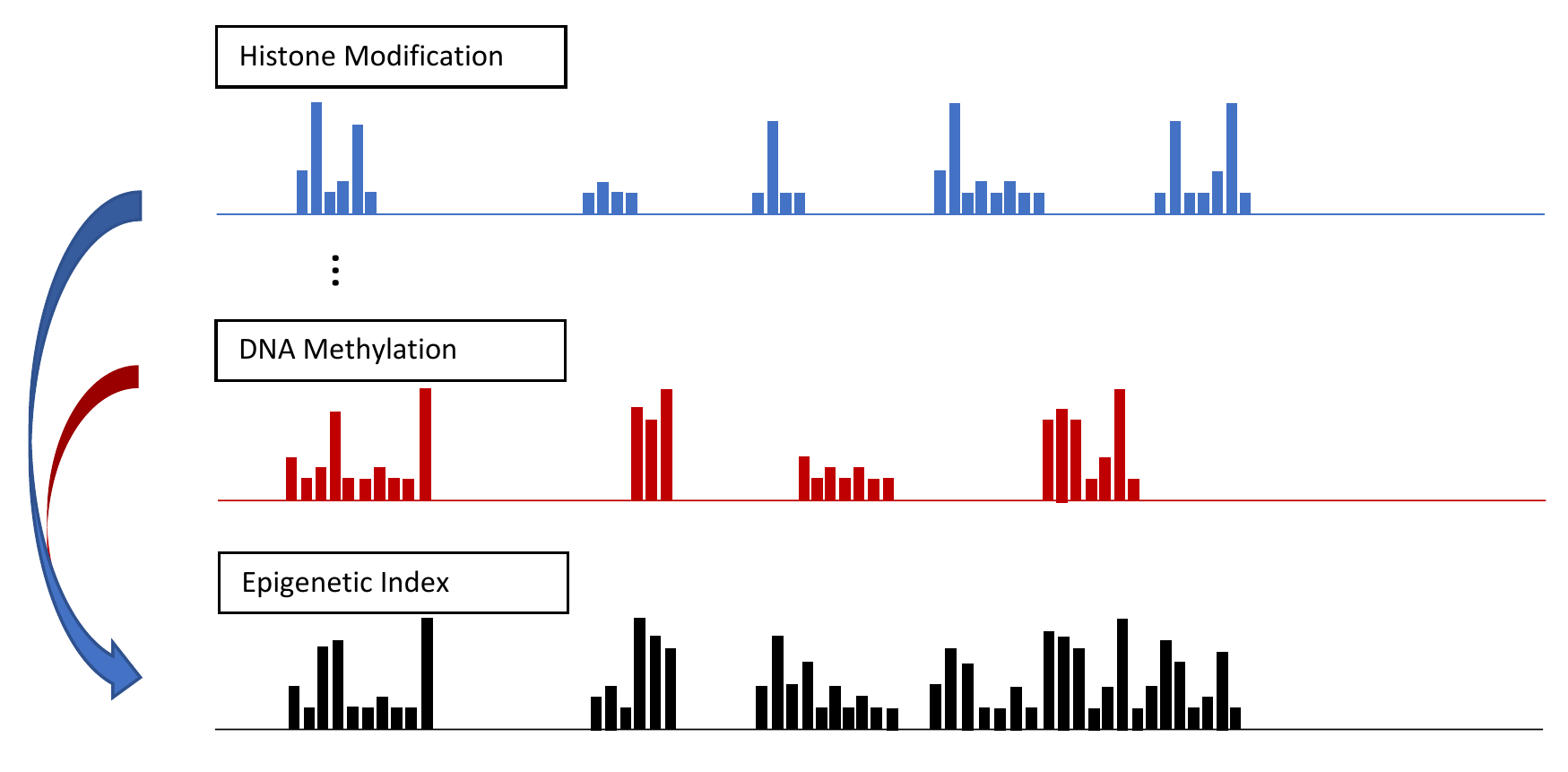}
  \caption{Illustration of data fusion for epigenetic data from the Roadmap Epigenomics Project. As epigenetic indicators, multiple histone modifications and DNA methylation levels (represented by the heights of the bars) across the genome are measured from the experiments. To gain a systematic understanding of the epigenetic landscape, we aim to fuse those measurements and construct an epigenetic index. Such an index is critical for understanding the epigenetic regulation of gene expression systematically.}
  \label{fig:epiindex}
\end{figure}
\end{example*}

Data fusion can have numerous meanings. In this paper, we refer to the process of integrating information from multiple data sources as data fusion \citep{hall1997introduction,hall2004mathematical,waltz1990multisensor,khaleghi2013multisensor}. More specifically, we focus on fusions at the data level. % and intend to gain interdisciplinary views of multiple datasets through data fusion.}
%In a broader context, data level fusion such as proposed here will play an increasingly important role to bridge the gap between the proliferation of large-scale data and our understanding of scientific problems.}%, including fusion on the data level, the feature level, and the decision level
The primary motivation of data level fusion is to %gain interdisciplinary views of multiple datasets and 
provide key factors that would improve the interpretations and predictions of potential outcomes. Researchers may be able to capture some pieces of the puzzle by a single-data-type study, but data fusion approaches could help fit all pieces together. In the above example, epigenetic modifications, such as histone modification and DNA methylation, jointly characterize epigenetic variations across cell types and genome, especially in the promoter region immediately upstream of where a gene's transcription begins. Integrative analysis of these epigenetic features would considerably improve the power of using epigenetic variation to explain gene expressions. In addition, data fusion approaches could provide more efficient ways of studying the statistical associations than an analysis that uses only a single data type. Finally, since it is computationally intensive to deal with multiple large-scale datasets simultaneously, the integrative analysis that allows data integration can help to ease computation intensity while at the same time preserves as much useful information of the original data as possible.

The integration of multiple data sources has been considered previously in the genomic literature. A comprehensive review of this topic can be found in \cite{ritchie2015methods} and \cite{ray2017adaptive}. The development has taken several paths. One may regard data integration as a dimension reduction problem and try to find a lower dimensional representation of the multiple measurements. Principal component analysis (PCA), for example, is one of the most widely used techniques, which extracts linear relations that best explain the correlated structure in the data. Following this line of thinking, \cite{zang2016high} proposed a PCA-based approach for bias correction and data integration in high-dimensional genomic data. \cite{meng2016dimension} also explored the use of dimension reduction techniques to integrate multi-source genomic data. The second approach resorts to nonnegative matrix factorization to find meaningful and interpretable features in the data. This is the approach taken by \cite{zhang2011novel} to jointly analyze the microRNA (miRNA) and gene expression profiles. \cite{zhang2012discovery} also employed a joint matrix factorization method to integrate multi-dimensional genomic data and identified subsets of messenger RNA (mRNAs), miRNAs, and methylation markers that exhibit similar patterns across different samples or types of measurements. The third approach relies on model-based data integration. For example, \cite{letunic2004smart} developed a database for genomic data integration using the hidden Markov model, which provided extensive annotations for the domains in the genomic sequence. \cite{ernst2012chromhmm} proposed an automated learning system for characterizing chromatin states, which used the multivariate hidden Markov model to learn chromatin-state assignment for each genomic position from multiple sources of epigenetic data.

Due to the special features of multi-source data, there are several challenges in integrating them. First, the multi-source data are often collected using different experimental techniques and thus come in a wide variety of formats or data types. In the above case of characterizing epigenomic landscape, histone modification levels are collected using the ChIP sequencing (ChIP-seq) technique and recorded as continuous measurements, whereas DNA methylation levels are collected using a bisulfite sequencing technique and range from $0$ to $1$. Second, the multi-source data are not necessarily the direct measurements of the same physical quantity, even though they are all manifestations of a certain latent variable. For example, ChIP-seq characterizes the epigenetic activity of the protein tails of histone at different amino acid positions, while whole genome bisulfite sequencing (WGBS) provides single-cytosine methylation levels cross the whole genome. Third, the relationship between the multi-source measurements and the latent variable is often unknown. For these reasons, it is difficult to apply conventional methods such as the PCA to perform the integrative analysis.

To address the above issues, we propose a unified framework for integrating data from multiple sources. Assume there exists a latent variable $Y$, to which the multi-source measurements, say $W_1,\ldots,W_K$, are related in some linear or nonlinear ways $\f_1, \ldots, \f_K$. That is 
\begin{equation}
\f_1(W_1)=Y, \ldots, \f_K(W_K) = Y.
\end{equation}
We seek to find a representation of the latent variable that captures the common information contained in $\f_1(W_1), \ldots, \f_K(W_K)$ even when $\f_1, \ldots, \f_K$ are unknown. In epigenetic data, $W_1,\ldots,W_K$ represent the histone modifications and DNA methylation levels collected from multiple sources (Figure \ref{fig:epiindex}). We aim to find a representation, the epigenetic index, of the latent variable $Y$. Intuitively, when $\f_1, \ldots, \f_K$ are known functions, $Y$ can be represented by averaging $\f_1(W_1), \ldots, \f_K(W_K)$, where each one itself is an estimate of $Y$. In practice, however, we can hardly know $\f_1, \ldots, \f_K$, let alone to estimate $\sum_{k=1}^K \f_k(W_k)/K$. Instead, we propose to find a set of transformation functions $h_1, \ldots, h_K$ such that the distance between each transformation $h_k(W_k)$ ($k=1,\ldots,K$) and their average $\sum_{k=1}^K h_k(W_k)/K$ is minimized. Then the solution to the above optimization problem provides a group of optimal transformations. Their average naturally can be used as a representation for the latent variable. In other words, we intend to find a one-dimensional representation of the higher dimensional data, which is pertinent to a multidimensional scaling problem.

To solve the optimization problem, a technical aspect of our method is to use the basis expansion. It enables the characterization of nonlinear transformations $h_1,\ldots,h_K$ and further simplifies the optimization procedure. With basis expansion, the optimal solution takes a specific form, which makes our method convenient and efficient to implement. To reflect its relationship with multidimensional scaling and basis expansion, we name our method as B-scaling and the one-dimensional representation as B-mean. The asymptotic property of B-mean is also established to provide theoretical underpinnings for our proposed method. 
To illustrate its empirical performance, we applied the B-scaling method to integrate multi-source epigenetic data and established an epigenetic index (EI) across the genome using B-mean. The EI showed better explanations of the changes in gene expressions than any of the data source.

The rest of the paper is organized as follows. In Section \ref{sec:method}, we introduce the proposed method for integrating multiple measurements. In Section \ref{sec:theoretic}, we develop the theoretical properties of the B-mean. %Implementation issues are discussed in Section \ref{}. 
Simulation studies and real data analysis are reported in Section \ref{sec:sim} and Section \ref{sec:app}. Section \ref{sec:dis} concludes the paper with a discussion. All the proofs are provided in Appendix.

%--------------------------------------------------
%--------------------------------------------------
\section{Model Setup}\label{sec:method}
% Let $\W = (W_1,\ldots,W_K)^T$ be $K$ independent random transformation of an unknown random variable $Y$. 
% Our goal is to extract  $Y$ from $\W$.

In this section, we first consider the population version of the B-scaling method and then generalize to its sample version. Assume that $Y$ is a one-dimensional random variable, which we do not observe. Instead, we observe $K$ numbers, denoted by $ W = (W_1,\ldots,W_K)^T \in \mathbb R^{K}$, which measure $Y$ in some linear or nonlinear ways. Our goal is to extract $Y$ from $W$.
In the proposed framework, we assume that the latent variable $Y$ and $W_1,\ldots,W_K$ are connected through some functions $\f_1,\ldots, \f_K$. Specifically, let $W_1,\ldots,W_K$ be $K$ random variables and $\mathcal H_k$ be a Hilbert space of functions of $W_k$ for $k=1,\ldots,K$. We assume there exist $\f_k \in \mathcal H_k$ such that $ Y = \f_k(W_k)$ for $k=1,\ldots,K$. 

To recover the transformations $\f_1, \ldots, \f_K$, we can use the fact that, if such transformations do exist, then it must be true that
\begin{equation*}
\f_1(W_1) = \ldots = \f_K(W_K) = Y.
\end{equation*}
Consequently, $\f_1,\ldots,\f_K$ must satisfy the relation
\begin{equation*}
\E \Big[ \sum_{k=1}^K \Big(\f_k(W_k) - K^{-1}\sum_{k=1}^K \f_k(W_k) \Big)^2 \Big] = 0.
\end{equation*}
In other words, if $\sH_1, \ldots, \sH_K$ are families of functions rich enough to contain $\f_1,\ldots, \f_K$, then $(\f_1,\ldots,\f_K)$ should minimize
\begin{equation*}
L(h_1,\ldots,h_K) = \E \Big[ \sum_{k=1}^K \Big(h_k(W_k) - K^{-1}\sum_{k=1}^K h_k(W_k) \Big)^2 \Big]
\end{equation*}
among $(h_1,\ldots, h_K)$.%\in \sH_1 \times \ldots \times \sH_K$.

However, note that the function is minimized trivially by $(h_1, \ldots, h_K)=(0,\ldots,0)$, which yields $L(0,\ldots,0)=0$. To avoid this trivial solution, we assume that $h_1 \neq 0, \ldots, h_K\neq 0$. Furthermore, since the latent variable $Y$ is unaffected by a multiplicative constant, we assume, without loss of generality, $\| h_1 \| = \ldots = \|h_K\|=1$. Based on these observations, we propose the following optimization problem
\begin{eqnarray}\label{eq:obj}
&& \text{minimize} \quad L(h_1,\ldots, h_K), \nonumber \\ 
&& \text{subject to:} \quad h_1 \in \sH_1, \ldots, h_K \in \sH_K, \quad \|h_1\| = \ldots = \|h_K\|=1 .
\end{eqnarray}

\begin{figure}[H]
  \centering
  \includegraphics[width=.9\textwidth]{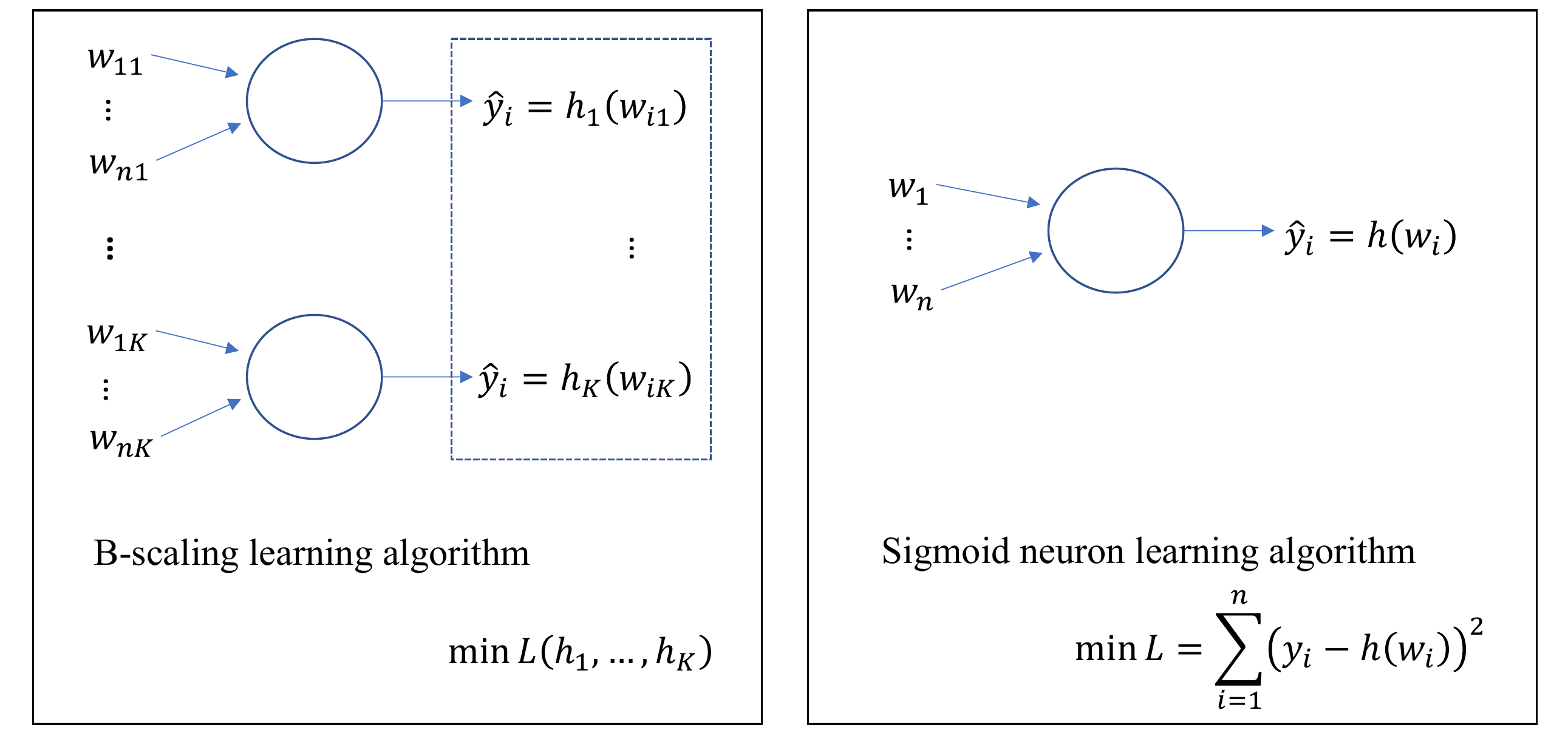}
  \caption{B-scaling from the perspective of sigmoid neuron – the building block of deep neural network. For the left panel, $w_{ik}$ is the $i$th observation for $k$th measurement, and $\hat{y}_i$ is the $i$th fused value for $i=1,\ldots,n$ and $k=1,\ldots,K$.}
  \label{fig:sigmoid}
\end{figure}
\begin{remark}
The ideas of B-scaling and the sigmoid neuron share some common characteristics (Figure \ref{fig:sigmoid}). A sigmoid neuron takes real-valued data as inputs, and the output function, known as the sigmoid function, is usually a nonlinear function. B-scaling also takes multi-source data as inputs. The output functions $h_1, \ldots, h_K$ are defined to be either linear or nonlinear. Furthermore, both methods aim to learn the output functions through minimizing their loss functions. However, B-scaling is performed in an unsupervised manner and allows the output functions to have more general forms. To find those optimal transformations (output functions), B-scaling utilizes the fact that they produce the same output values, which we call as the fused measurement. While the optimization for a sigmoid neuron is implemented in a supervised way.
\end{remark}

% To find such transformations, a natural way is to solve the following minimization problem,
%  \begin{equation}\label{eq:obj}
%   \min_{\F_1,\ldots,\F_K} \Big\{ \E \Big[\sum_{k=1}^K \Big(\mathscr F_k(W_k) - K^{-1}\sum_{k=1}^K \F_k(W_k)\Big)^2 \Big]: \F_1 \in \mathcal H_1,\ldots, \F_K \in \mathcal H_K \Big\}.
% \end{equation}

\begin{remark}[B-scaling, measurement error model, and multidimensional scaling]
Despite its appearance, this is not a measurement error problem \citep{fuller2009measurement,carroll2006measurement}, because these measurements are not necessarily measuring the same physical quantity. Instead, it is akin to a multidimensional scaling problem \citep{cox2000multidimensional,kruskal1964multidimensional,torgerson1952multidimensional}: that is, we would like to find a one-dimensional representation of higher dimensional data. However, different from multidimensional scaling, we are not
%merely
interested in preserving inter-variable distances but rather making a connection among several measurements based on the fact that they are related to the same latent quantity. A technical aspect of our method is the use of basis expansion. Thus, we call our method B-scaling, to reflect its relationship with both multidimensional scaling and basis expansion.
\end{remark}

% \begin{remark}[{\color{red}Model identifiability.}]
Let $ (f_1, \ldots, f_K) $ be the solution to the optimization problem (\ref{eq:obj}). In the ideal situation where $\f_k \in \sH_k$ for $k=1,\ldots,K$, the objective function can be perfectly minimized to $0$, and this gives the unique optimizer $(f_1, \ldots, f_K) = (\f_1, \ldots, \f_K)$. In this case, any $f_k(W_k)$ can be used as the representation of the latent variable $Y$. However, in practice we do not make such a strong assumption and allow $\f_k$ to be outside of $\sH_k$. In this case, we use the average of the optimal solution $(f_1, \ldots, f_K)$ as a representation of the latent variable $Y$. This leads to the notions of B-mean and B-variance, as defined below.
% \end{remark}

% Based on the optimization results,  we propose the concept of B-scaling mean as a model for the true measurement.
\begin{definition}\label{def:1}
 Let $f_1, \ldots, f_K$ be the solution to the minimization problem (\ref{eq:obj}). Let
\begin{equation*}
  \mu_{B}(W_1,\ldots,W_K) = K^{-1}\sum_{k=1}^K f_k(W_k), \ \
  V_B(W_1,\ldots,W_K) = K^{-1} \sum_{k=1}^K [ f_k(W_k) - \mu_{B}(W_1,\ldots,W_K) ]^2.
\end{equation*}
We called them B-scaling mean and B-scaling variance, of $W_1, \ldots, W_K$, respectively, or B-mean and B-variance for short.
\end{definition}

\begin{remark}[B-mean as a representation]
Note that, the B-mean is a random variable, and its definition does not depend on the assumption that the true transformations $\f_k \in \sH_k$ for all $k=1,\ldots,K$. Thus, we have the following two scenarios. 
\noindent\textbf{Scenario 1.} When $\f_1 \in \sH_1, \ldots, \f_K \in \sH_K$, $V_B(W_1,\ldots,W_K)$ is zero, and the unique optimizer $\f_1(W_1), \ldots, \f_K(W_K)$ are equal to the B-mean almost surely. The B-mean thus naturally becomes a representation of $Y$ up to a multiplicative constant.
\noindent\textbf{Scenario 2.} In a more general case, when some $\f_k \notin \sH_k$, the distance between B-mean and the latent variable $Y$ is largely determined by the extent to which $\f_k$ deviates from $\sH_k$. Since we do not make any assumption about the latent variable $Y$ and allow $\f_k$ to be outside $\sH_k$, the B-mean, as defined here, is only a representation of the latent variable $Y$ to the best of our knowledge of observed measurements $W_1, \ldots, W_K$. 
\end{remark}

We call B-mean the fused measurement, and $W_1,\ldots,W_K$ surrogate measurements. We have to emphasize that we do not intend to estimate the latent measurement $Y$. Instead, we are interested in extracting common information contained in muti-source data. Similar to the concept of principal component, B-mean is another type of representation of the data such that its distance with all $K$ transformations is minimized. In the rest of the paper, we focus on discussing the properties of B-mean.

For the choice of $\sH_1, \ldots, \sH_K$, we propose the clans of spline functions as stated by the condition below. It is convenient to express functions in $\sS(m, \tk)$ in terms of spline basis.
% To estimate the B-mean, we consider $\mathscr F_1,\ldots, \mathscr F_K$ in a finite-dimensional functional space. More specifically, we assume $\mathscr F_1,\ldots, \mathscr F_K$ satisfy condition (\ref{cond:bspline}).
%---------------------------------------------------------------------
\begin{condition}\label{cond:bspline}
  For each $k=1, \ldots, K$, let $\tk$ be the set $\{(\tk_0, \ldots, \tk_{k_0}): 0=\tk_0 < \ldots < \tk_{k_0}=1\}$. Let 
% Assume $\mathscr F_k \in \sS(m, \bt^{(k)})$ for $k=1,\ldots,K$, where 
$\sS(m, \tk)$ be the set of spline functions of order $m$ $(m \ge 1)$ with knots $\tk$. That is, for $m=1, \sS(m,\tk)$ is the set of step functions with jumps at the knots; for $m = 2, 3, \ldots$,
  \begin{align*}
  \sS(m, \tk) = &  \{ s \in C^{m-2}[0,1]: s(x) \text{ is a polynomial of degree } \\
  & \hspace{1in}  (m-1) \text{ on each subinterval } [t^{(k)}_i, t^{(k)}_{i+1}] \},
  \end{align*}
where $C^q[0,1]$ is the clan of functions on $[0,1]$ with continuous $q$th derivative. We assume $\sH_k=\sS(m, \tk)$.
\end{condition}
%---------------------------------------------------------------------
\noindent
For any fixed $m$ and $\tk$, let $\{  N^{(k)}_{1,m}(x),\ldots, N^{(k)}_{\ell,m}(x) \}$ be the basis of $\sS(m, \tk)$, where $\ell=k_0+m-1$ is the number of basis functions in $\sS(m, \tk)$. For any $h_k(x) \in \sS(m,\tk)$, there exits $A_k \in \mathbb R^{\ell}$ such that $ h_k(x) = A_k^T \Nk(x) $, where $\Nk(x) = ( N^{(k)}_{1,m}(x),\ldots, N^{(k)}_{\ell,m}(x) )^T \in \mathbb R^{\ell}$.
% \red{(What is the exact value of $\ell$? it should be a function of $k_0, m$, perhaps $m k_0 - (m-1)(k_0 - 1)=m+k_0-1$.}
By condition (\ref{cond:bspline}), for $k=1, \ldots, K$,
$$h_k(W_k) = A_k^T \Nk(W_k).$$
The objective function in (\ref{eq:obj}) can then be written in matrix as
\begin{equation} \label{eq:objm}
 \E \Big[ \sum_{k=1}^K \Big( A_k^T \Nk(W_k) - K^{-1}\sum_{k=1}^K A_k^T \Nk(W_k) \Big)^2 \Big], A_1, \ldots, A_K \in \R^{\ell}.
\end{equation}
Let 
% \begin{align*}
% & A=(A_1^T, \ldots, A_K^T)^T, \quad \quad \matN = \text{ diag }(\N_1(W_1), \ldots, \N_K(W_K)) \\
% & \one = (1,1,\ldots,1)^T \in \R^K, \quad Q_1 = I - \mathbf{1} \mathbf{1}^T/K 
% \end{align*}
$A=(A_1^T, \ldots, A_K^T)^T$, $\matN = \text{ diag }(\N_1(W_1), \ldots, \N_K(W_K))$, $\one = (1,1,\ldots,1)^T \in \mathbb R^{K}$, and $Q = I - \mathbf{1} \mathbf{1}^T/K $.
Then the objective function $L(A_1, \ldots, A_K)$ can be rewritten as
\begin{equation*}
L(A) = \E \| \matN^T A - K^{-1} \one \one^T \matN^T A\|^2 = A^T \E(\matN Q \matN^T) A.
\end{equation*}
Thus, at the population level, optimization (\ref{eq:obj}) amounts to solving the following generalized eigenvalue problem:
\begin{align}\label{eq:objpop}
& \text{minimize} \quad A^T \E(\matN Q \matN^T) A \nonumber \\
& \text{subject to:} \quad A^T \var(\matN \one/K) A = 1.
\end{align}
Let $\Lambda = \E(\mathbb N Q \mathbb N^T)$ and $\Sigma = \var( \mathbb N \mathbf{1}/K ) $, the above generalized eigenvalue problem yields the following explicit expression of the B-mean.
% the optimal solution for (\ref{eq:objeigen}) is $\va = \Sigma^{-1/2} \vb$ where $\vb$ is the eigenvector of $\Sigma^{-1/2} \Lambda \Sigma^{-1/2}$ corresponding to its non-zero minimum eigenvalue.
% \begin{theorem}
%   Given $(f_1,\ldots,f_K)$ be the optimal solution of (\ref{eq:obj}), with Condition~\ref{cond:bspline}  such that $\mathscr F_k(x) \in \sS(m,\bt^{(k)})$ for $k=1,\ldots,K$, the B-mean of $W_1,\ldots,W_K$ has the form of
%   \begin{equation*}
%     \mu_B(W_1, \ldots, W_K) = \va^T \mathbb N \mathbf 1/K,
%   \end{equation*}
% where $\va = \Sigma^{-1/2} \vb$, $\vb$ is the eigenvector of $\Sigma^{-1/2} \Lambda \Sigma^{-1/2}$ corresponding to its minimum eigenvalue, $\Lambda= \E(\mathbb N Q_1 \mathbb N^T)$ and $\Sigma = \var( \mathbb N \mathbf{1}/K ) $.
% \end{theorem}
\begin{theorem}\label{thm:popb}
 The B-mean of $W_1,\ldots,W_K$ has the form of
  \begin{equation*}
    \mu_B(W_1, \ldots, W_K) = \va^T \mathbb N \mathbf 1/K,
  \end{equation*}
where $\va = \Sigma^{-1/2} \vb$, and $\vb$ is the eigenvector of $\Sigma^{-1/2} \Lambda \Sigma^{-1/2}$ corresponding to its smallest eigenvalue. 
\end{theorem}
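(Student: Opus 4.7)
The plan is to recognize (\ref{eq:objpop}) as a generalized eigenvalue problem in $A$ and, via a whitening substitution, reduce it to a standard symmetric eigenvalue problem that is handled by the Rayleigh--Ritz principle.

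First I would record the structural properties of $\Lambda$ and $\Sigma$. Both are symmetric, and $\Lambda$ is positive semi-definite because $Q = I-\one\one^T/K$ is an orthogonal projection, so that $u^T\Lambda u = \E\|Q\matN^T u\|^2 \ge 0$ for every $u$. Assuming $\Sigma$ is positive definite (a mild non-degeneracy condition on the stacked basis evaluations $\Nk(W_k)$), the symmetric square roots $\Shalf$ and $\Sinvhalf$ are well defined.

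Next, I would carry out the change of variables $\vb = \Shalf A$, equivalently $A = \Sinvhalf \vb$. The constraint $A^T \Sigma A = 1$ becomes $\|\vb\|^2 = 1$, and the objective $A^T \Lambda A$ becomes $\vb^T M \vb$ where $M = \Sinvhalf \Lambda \Sinvhalf$ is still symmetric and positive semi-definite. The problem thus reduces to
\begin{equation*}
\min_{\|\vb\|=1}\; \vb^T M \vb ,
\end{equation*}
and the Rayleigh--Ritz (Courant--Fischer) theorem tells us that the minimum equals the smallest eigenvalue of $M$ and is attained precisely at a corresponding unit eigenvector $\vb$. Undoing the whitening gives the minimizer $\va = \Sinvhalf \vb$ of (\ref{eq:objpop}), which is exactly the vector named in the theorem. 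Plugging this back into the basis expansion $h_k(W_k) = A_k^T \Nk(W_k)$ from Condition \ref{cond:bspline} yields $K^{-1}\sum_{k=1}^K f_k(W_k) = K^{-1}\va^T \matN \one$, which is the asserted formula for $\mu_B$.

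The only genuine obstacle I anticipate is verifying the invertibility of $\Sigma$; this is a non-degeneracy requirement on the joint law of the stacked spline evaluations $\Nk(W_k)$ and should follow from mild assumptions on the knot placement and on the continuous part of the distribution of each $W_k$. Should $\Sigma$ be only positive semi-definite, the same argument goes through on the range of $\Sigma$ with the Moore--Penrose pseudoinverse replacing $\Sinvhalf$, leaving the form of the solution unchanged.
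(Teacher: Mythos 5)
Your proposal is correct and follows essentially the same route the paper takes: the paper presents Theorem \ref{thm:popb} as an immediate consequence of recasting (\ref{eq:obj}) as the generalized eigenvalue problem (\ref{eq:objpop}), and your whitening substitution $\vb=\Shalf A$ plus the Rayleigh--Ritz principle is precisely the standard justification of that step, which the paper leaves implicit. Your concern about the invertibility of $\Sigma$ is handled by the paper's own Condition \ref{cond:sigma.eigenvalue}, which assumes $\lambda_{\min}(\Sigma)>0$.
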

% \noindent\red{suggestion: write the above definition as a theorem or a proposition. This theorem gives the explicit form of the $B$-mean in finite-dimensional functional spaces. In fact, you can rewrite the above in terms of generic finite-dimensional functional spaces $\mathcal H_k$ and then say when $\mathcal H_k$ are the spline spaces,  $\mathbb N$ takes a special form.}
% \begin{remark}
%   For degenerate matrix $\Sigma$ of rank $r$, assume it has eigenvalue decomposition
%   \begin{eqnarray*}
%     \Sigma = (U_1, U_2) \begin{pmatrix} D_1 & \mathbf 0_{r \times (K\ell-r)} \\ \mathbf 0_{(K\ell-r) \times r} & \mathbf 0_{(K\ell-r) \times (K\ell-r)} \end{pmatrix} \begin{pmatrix} U_1^T \\ U_2^T \end{pmatrix},
%   \end{eqnarray*}
%   where $ U_1 \in \R^{K\ell \times r}, U_2 \in \R^{K\ell \times (K\ell -r)}$ and $D_1 \in \R^{r \times r}$. Let $\Sigma^{-1/2} = U_1 D_1^{-1/2} U_1^T$.
% \end{remark}
\noindent
At the sample level, assume that we observe i.i.d. samples $ \w_{i} = (w_{i1}, \ldots, w_{iK})^T, i=1,\ldots,n$, which measures the i.i.d. latent variable $y_i$ in some linear or nonlinear ways. We use the sample-level counterpart of $\mu_B(W_{1},\ldots,W_{K})$ in Theorem \ref{thm:popb} to estimate the B-mean. Specifically, let 
\begin{align}\label{eq:ldgm}
 & \mathbb N_i = \text{ diag }( \N_1(w_{i1}), \ldots, \N_K(w_{iK}) ) \nonumber \\
 %& \Lambda_n = \E_n ( \matN Q \matN^T), \quad \Sigma_n = \var_n( \matN \one/K ),
  & \Lambda_n = n^{-1} \sum_{i=1}^n \matN_i Q \matN_i^T, \quad \Sigma_n = n^{-1} \sum_{i=1}^n (\matN_i \one - n^{-1} \sum_{i=1}^n \matN_i \one ) (\matN_i \one - n^{-1} \sum_{i=1}^n \matN_i \one )^T/K^2.
 \end{align}
%where $\E_n(\cdot)$ and $\var_n(\cdot)$ stand for the sample mean and sample covariance matrix of a sample of random vectors. 
The estimate of $\mu_B( w_{i1}, \ldots, w_{iK})$ is
\begin{equation*}
\hat{\mu}_B(w_{i1}, \ldots, w_{iK}) = \hat \va^T \matN_i \one/K,
\end{equation*}
where $ \hat \va = \Sigma_n^{-1/2} \hat \vb$, and $\hat \vb$ is the eigenvector of $\Sigma_n^{-1/2} \Lambda_n \Sigma^{-1/2}_n$ corresponding to its smallest eigenvalue. In Algorithm \ref{alg:bscaling}, we show an algorithm for computing the B-mean based on the data $(w_{i1}, \ldots, w_{iK}), i=1,\ldots,n$.
%Below is an algorithm for computing the B-mean based on the data $(w_{i1}, \ldots, w_{iK}), i=1,\ldots,n$.

\begin{algorithm}[h]
\caption{B-scaling Algorithm}
\label{alg:bscaling}
1. Rescale $\{w_{i1},\ldots,w_{iK}\}_{i=1}^n$ such that they are all defined on $[0,1]$. For $k$th measurement $\{w_{ik}\}_{i=1}^n$ ($k=1,\ldots,K$), generate $\ell$ spline functions of order $m$. Denote the spline functions as $N^{(k)}_{1,m}(x), \ldots, N^{(k)}_{\ell,m}(x)$. 
% \red{Here, do we need to rescale $w_{ik}$ so that they are all defined on $[0,1]$?}\\

2. Evaluate the spline functions at each data point of the $K$ measurements, resulting in
\begin{equation*}
 \N_1 (w_{i1}) =
 \begin{pmatrix}
	 N^{(1)}_{1,m}( w_{i1}) \\
	 \vdots \\
	 N^{(1)}_{\ell,m}( w_{i1} )
  \end{pmatrix}
  , \ldots,
  \N_K (w_{iK}) =
  \begin{pmatrix}
  	N^{(K)}_{1,m}( w_{iK}) \\
  	\vdots \\
  	N^{(K)}_{\ell,m}( w_{iK})
  \end{pmatrix}, \quad i=1, \ldots, n.
\end{equation*}
Let $\mathbb N_i = \text{diag} ( \N_1(w_{i1}), \ldots, \N_K(w_{iK}) )$. Calculate $\Lambda_n$ and $\Sigma_n$ by equation (\ref{eq:ldgm}). 

3. Find the eigenvector of $\Sigma_n^{-1/2} \Lambda_n \Sigma_n^{-1/2}$ corresponding to its smallest eigenvalue, denoted as $\hat \vb$. Output $\hat \va = \Sigma_n^{-1/2} \hat \vb$.

4. Estimate the B-mean at the data points $(w_{i1}, \ldots, w_{iK})$ by $\hat \mu_B(w_{i1},\ldots,w_{iK}) = \hat \va^T \mathbb N_i \mathbf 1/K$ for $i=1,\ldots,n$.
\end{algorithm}

%----------------------------------------------------------------------------
%----------------------------------------------------------------------------
\section{Theoretical Properties}\label{sec:theoretic}

In this section, we study the asymptotic property of the B-mean. Recall that the B-mean at $(w_{i1}, \ldots, w_{iK})$ is estimated by $\hat \mu_B(w_{i1}, \ldots, w_{iK}) = \hat \va^T \mathbb N_i \mathbf 1/K$, where $\hat \va = \Sigma_n^{-1/2} \hat \vb$, and $\hat\vb$ is the eigenvector of $\Sigma^{-1/2}_n \Lambda_n \Sigma_n^{-1/2}$ corresponding to its smallest eigenvalue. To show the asymptotic distribution of the B-mean, we first show that both $\hat \vb$ and $\hat\va$ are asymptotically normally distributed.
To simplify presentation, we introduce the following notations. Let $\lambda_{\max}(\cdot)$ and $\lambda_{\min}(\cdot)$ denote the maximum and minimum eigenvalues/singular values of a matrix respectively. 
%Let $R = \Sigma^{-1/2} \Lambda \Sigma^{-1/2} $ and $R_n = \Sigma^{-1/2}_n \Lambda_n \Sigma_n^{-1/2}$. 
Let $M_1 \otimes M_2$ denote the Kronecker product of two matrices $M_1$ and $M_2$. Let $M^+$ denote the pseudo-inverse of a matrix $M$. Let $\xrightarrow{D}$ denote in distribution convergence and $\xrightarrow{P}$ in probability convergence.
We make the following assumptions.
%---------------------------------------------------
%---------------------------------------------------
\begin{condition}\label{cond:wdis}
Without loss of generality, we assume that $\{w_{i1},\ldots,w_{iK}\}_{i=1}^n$ are i.i.d. random variables in $[0,1]$. Moreover, we assume that each $w_{ik}$ has a finite $4(m-1)$th moment.
% \red{does this conflict with the requirement that $w$ are in $[0,1]$?}
\end{condition}

\begin{condition} \label{cond:sigma.eigenvalue} $0 < \lambda_{\min}(\Sigma) \le \lambda_{\max}(\Sigma) < \infty$.
\end{condition}

% \begin{condition}\label{cond:sigman}
% Given $\Sigma_n, \Sigma \in \R^{K\ell \times K\ell}$, there exists $n_0$ such that $rank(\Sigma_n)=rank(\Sigma)$ for $n\ge n_0$.
% \end{condition}

%---------------------------------------------------
The finite $4(m-1)$th moment assumption in condition \ref{cond:wdis} guarantees that when the basis $\{\N_1(x),\ldots,\N_K(x)\}$ are of $m$th order, the random sequence $n^{-1} \sum_{i=1}^n N^{k_1}_{\ell_1,m}(w_{ik_1}) N^{k_2}_{\ell_2,m}(w_{ik_2})$ has an asymptotic normal distribution, and it is true for all $k_1,k_2=1\ldots,K$ and $\ell_1, \ell_2=1,\ldots,\ell$. The asymptotic properties of $\Lambda_n$ and $\Sigma_n$ thus follow. Condition \ref{cond:sigma.eigenvalue} ensures that no $N^{(k_1)}_{\ell_1,m}(w_{ik_1})$ has a dominate variance or is linearly dependent on $N^{(k_2)}_{\ell_2,m}(w_{ik_2})$ for $k_1 \neq k_2$ and $\ell_1 \neq \ell_2$. It also implies the existence of $\Sinv$. In the following, we denote $\Sigma^{-1/2} \Lambda \Sigma^{-1/2}$ by $R$, $\Sigma^{-1/2}_n \Lambda_n \Sigma_n^{-1/2}$ by $R_n$, and $\matN \one/K$ by $\matZ$.  With conditions \ref{cond:wdis} and \ref{cond:sigma.eigenvalue}, we have the following asymptotic distribution for $R_n$.
%----------------------------------------------------
%----------------------------------------------------
\begin{theorem}\label{thm:R.conv}
  Under conditions \ref{cond:wdis} - \ref{cond:sigma.eigenvalue}, we have
  \begin{align}
%     R_n \xrightarrow{\mathcal P} R, \ \ \ 
    \sqrt{n} [\vectorize( R_n - R )] \xrightarrow{D} N(0, \Pi_{ R}), \label{eq:Rn}
  \end{align}
where
\begin{equation*}
\Pi_R = \begin{pmatrix} \Omega_1 & \Omega_2 \end{pmatrix} \begin{pmatrix} \Phi_{11} & \Phi_{12} \\ \Phi_{21} & \Phi_{22} \end{pmatrix} \begin{pmatrix} \Omega_1^T \\ \Omega_2^T \end{pmatrix},
\end{equation*}
and 
\begin{align*}
& \Omega_1 = -(\Sinvhalf\Lambda \otimes I + I \otimes \Sinvhalf \Lambda)(\St \otimes \Shalf + \Shalf \otimes \St)^{-1} \\
& \Omega_2 = (\Sinvhalf \otimes \Sinvhalf) \\
& \Phi_{11}=\E\Big\{ \Big[\matZ \otimes \matZ - \E(\matZ \otimes \matZ) - (\matZ - \E(\matZ)) \otimes \E(\matZ)  -  \E(\matZ) \otimes (\matZ - \E(\matZ)) \Big] \\
& \quad \quad \quad \quad \Big[\matZ \otimes \matZ - \E(\matZ \otimes \matZ)  -  (\matZ - \E(\matZ)) \otimes \E(\matZ)  -  \E(\matZ) \otimes (\matZ - \E(\matZ)) \Big]^T \Big\} \\
& \Phi_{21} = \E \Big\{ \Big[ (\matN \otimes \matN)\vectorize(Q) - \E(\matN \otimes \matN)\vectorize(Q) \Big] \\
& \quad \quad \quad \quad \Big[ \matZ \otimes \matZ - \E(\matZ \otimes \matZ)  -  (\matZ - \E(\matZ))\otimes\E(\matZ)  -  \E(\matZ) \otimes (\matZ - \E(\matZ)) \Big]^T \Big\} \\
& \Phi_{22} = \E \Big\{  \Big[ (\matN \otimes \matN)\vectorize(Q) - \E(\matN \otimes \matN)\vectorize(Q) \Big] \Big[ (\matN \otimes \matN)\vectorize(Q) - \E(\matN \otimes \matN)\vectorize(Q) \Big] ^T \Big\}.
\end{align*}

% $\Pi_{\mathcal R} = M_{R} \Pi_{0} M_{R}^T$, $\Pi_0=\var(\vectorize(\Sn,\Lambda_n))$ and $M_{R} = \Big( -(I \otimes \Sinvhalf \Lambda + \Sinvhalf\Lambda \otimes I)(\Sinvhalf \otimes I + I \otimes \Sinvhalf)^{-1} (\Sigma^{-1} \otimes \Sigma^{-1}), (\Sinvhalf \otimes \Sinvhalf) \Big)$.
\end{theorem}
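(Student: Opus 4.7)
The plan is to derive the asymptotic normality of $\vectorize(R_n - R)$ in two stages: first, establish a joint multivariate central limit theorem for the pair $(\vectorize(\Sn - \St), \vectorize(\ldn - \ld))$; second, apply the multivariate delta method to the smooth map $g(A, B) = A^{-1/2} B A^{-1/2}$, whose Jacobian at $(\St, \ld)$ should supply the matrices $\Omega_1$ and $\Omega_2$ in the statement.

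For the joint CLT, I would set $\matZ_i = \matN_i \one / K$ so that $\Sn = n^{-1}\sum_i \matZ_i \matZ_i^T - \bar{\matZ}_n \bar{\matZ}_n^T$ with $\bar{\matZ}_n = n^{-1}\sum_i \matZ_i$, and $\St = \E(\matZ\matZ^T) - \E(\matZ)\E(\matZ)^T$. A first-order expansion of the quadratic $\bar{\matZ}_n \bar{\matZ}_n^T$ around $\E(\matZ)\E(\matZ)^T$, with remainder of order $O_p(n^{-1})$, yields the i.i.d.\ approximation
\begin{equation*}
\sqrt{n}\,\vectorize(\Sn - \St) = n^{-1/2}\sum_{i=1}^n \bigl[\matZ_i \otimes \matZ_i - \E(\matZ\otimes\matZ) - (\matZ_i - \E\matZ)\otimes \E\matZ - \E\matZ \otimes (\matZ_i - \E\matZ)\bigr] + o_p(1).
\end{equation*}
Meanwhile $\vectorize(\ldn - \ld) = n^{-1}\sum_i \bigl[(\matN_i \otimes \matN_i)\vectorize(Q) - \E(\matN\otimes\matN)\vectorize(Q)\bigr]$ is already an exact mean of i.i.d.\ centered terms. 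Condition \ref{cond:wdis} (together with $w_{ik} \in [0,1]$) guarantees that products of four spline-basis evaluations are integrable, so both summands have finite second moments and the classical multivariate CLT yields joint asymptotic normality with covariance blocks $\Phi_{11}$, $\Phi_{22}$, and $\Phi_{21} = \Phi_{12}^T$ precisely as written.

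For the delta-method step, $g$ is smooth on a neighborhood of $(\St, \ld)$ by condition \ref{cond:sigma.eigenvalue}. The derivative in $B$ is immediate: $\vectorize(\Sinvhalf\, dB\, \Sinvhalf) = (\Sinvhalf\otimes\Sinvhalf)\vectorize(dB)$, giving $\Omega_2$. The derivative in $A$ requires differentiating the matrix square root. Differentiating the identity $A^{1/2}A^{1/2} = A$ and vectorizing produces $\vectorize(dA^{1/2}) = (A^{1/2}\otimes I + I \otimes A^{1/2})^{-1}\vectorize(dA)$; combining with $dA^{-1/2} = -A^{-1/2}(dA^{1/2})A^{-1/2}$ and simplifying via the Kronecker identity $(A^{-1/2}\otimes A^{-1/2})(A^{1/2}\otimes I + I\otimes A^{1/2})^{-1} = (A\otimes A^{1/2} + A^{1/2}\otimes A)^{-1}$ yields $\vectorize(dA^{-1/2}) = -(A\otimes A^{1/2} + A^{1/2}\otimes A)^{-1}\vectorize(dA)$. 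Plugging this into $\vectorize(dR) = (A^{-1/2}B \otimes I + I \otimes A^{-1/2}B)\vectorize(dA^{-1/2}) + (A^{-1/2}\otimes A^{-1/2})\vectorize(dB)$ evaluated at $(\St, \ld)$ recovers exactly the expressions for $\Omega_1$ and $\Omega_2$ in the statement.

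Combining these ingredients through the multivariate delta method, $\sqrt{n}\vectorize(R_n - R) = \Omega_1 \sqrt{n}\vectorize(\Sn - \St) + \Omega_2 \sqrt{n}\vectorize(\ldn - \ld) + o_p(1) \xrightarrow{D} N(0, \Pi_R)$, and expanding the block product $(\Omega_1, \Omega_2)\,\Phi\,(\Omega_1, \Omega_2)^T$ reproduces the form of $\Pi_R$ given in the theorem. The main technical obstacle I anticipate is the matrix-square-root differentiation: one must justify invertibility of $\St^{1/2}\otimes I + I \otimes \St^{1/2}$ (which follows because its eigenvalues are pairwise sums of strictly positive eigenvalues of $\St^{1/2}$ under condition \ref{cond:sigma.eigenvalue}) and carry out the algebraic reduction that folds the composition of two Kronecker inverses into a single closed form. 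Everything else—the CLT input and the delta-method output—is routine once that derivative has been nailed down.
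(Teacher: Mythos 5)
Your proposal is correct and follows essentially the same route as the paper: the paper packages the linearization as influence functions via a first-order von Mises expansion (computing $\St^*$, $\ld^*$, and $(\Sinvhalf)^*$ and combining them by the product rule), which is exactly your joint-CLT-plus-delta-method argument in different clothing, and your expansion of $\Sn$, your formula $\vectorize(dA^{-1/2}) = -(A\otimes A^{1/2}+A^{1/2}\otimes A)^{-1}\vectorize(dA)$, and the resulting $\Omega_1,\Omega_2,\Phi$ blocks all coincide with the paper's (\ref{eq:st_star}), (\ref{eq:vecShalf}), and (\ref{eq:vecR}). The one technical point you flag, invertibility of $\Shalf\otimes I + I\otimes\Shalf$ under condition \ref{cond:sigma.eigenvalue}, is handled the same way implicitly in the paper.
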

%----------------------------------------------------
%----------------------------------------------------
In the rest of the paper, we write the matrix 
$$\begin{pmatrix} \Phi_{11} & \Phi_{12} \\ \Phi_{21} & \Phi_{22} \end{pmatrix}$$
in Theorem \ref{thm:R.conv} as $\Phi$. To further investigate the properties of the eigenvectors of $R_n$, we need the following condition.
\begin{condition}\label{cond:R.eigenvalue}
 Assume that matrix $ R \in \R^{K\ell \times K\ell}$ has distinct eigenvalues denoted as $d_1,\ldots,d_{r}$, such that $d_{1} > \ldots > d_{r} > 0$, where $r=K\ell$.
\end{condition}
Condition \ref{cond:R.eigenvalue} requires that the eigenvalues of $R$ are separable, which is to guarantee that the eigenvectors are uniquely defined. To conclude, we transform the minimization problem (\ref{eq:obj}) to the generalized eigenvalue problem (\ref{eq:objpop}) with condition \ref{cond:bspline}, and the optimal solution to (\ref{eq:objpop}) is the eigenvector corresponding to the smallest eigenvalue of the matrix $R=\Sigma^{-1/2} \Lambda \Sigma^{-1/2}$. When conditions \ref{cond:sigma.eigenvalue} and \ref{cond:sigma.eigenvalue} are further satisfied, the matrix $R$ is invertible and has separable eigenvalues, which ensures the uniquely defined eigenvectors. Thus, there exists a unique solution to the minimization problem (\ref{eq:obj}) given conditions \ref{cond:bspline}, \ref{cond:sigma.eigenvalue}, and \ref{cond:R.eigenvalue}. The next lemma gives the asymptotic distributions of $\hat\va$ and $\hat\vb$.

%----------------------------------------------------
%----------------------------------------------------
\begin{lemma} \label{lm:ba}
Under conditions \ref{cond:wdis}-\ref{cond:R.eigenvalue}, we have
\begin{equation*}
\sqrt{n}(\hat \vb - \vb) \xrightarrow{D} N \Big(\mathbf 0, (M_{b1}, M_{b2})\Phi (M_{b1}, M_{b2})^T \Big),
\end{equation*}
where $\Phi$ is defined in Theorem \ref{thm:R.conv}, and 
\begin{align*} 
& M_{b1} = -[ \vb^T \otimes V( d_{r} I - \Gamma )^{+} V^T] (\Sinvhalf \Lambda \otimes I + I \otimes \Sinvhalf\Lambda)(\St \otimes \Shalf + \Shalf \otimes \St)^{-1} \\
& M_{b2} = [\vb^T \otimes V( d_{r} I - \Gamma )^{+} V^T](\Sinvhalf \otimes \Sinvhalf),
\end{align*} 
with $\Gamma$ being the diagonal matrix $\text{diag }(d_1,\ldots,d_{r})$, and $V$ being a matrix whose columns are eigenvectors of $R$. The asymptotic distribution of $\hat\va$ is given by
\begin{equation*}
\sqrt{n} ( \hat\va - \va ) \xrightarrow{D} N \Big(\mathbf 0, (M_{a1},M_{a2})\Phi (M_{a1},M_{a2})^T \Big),
\end{equation*}
where
\begin{align*}
& M_{a1} = - (\vb^T \otimes I)(\St \otimes \Shalf + \Shalf \otimes \St)^{-1} \\
& \quad \quad \quad - \Sinvhalf [\vb^T \otimes V( d_{r} I - \Gamma )^{+} V^T](\Sinvhalf \Lambda \otimes I + I \otimes \Sinvhalf\Lambda)(\St \otimes \Shalf + \Shalf \otimes \St)^{-1}, \\
& M_{a2} = \Sinvhalf [ \vb^T \otimes V( d_{r} I - \Gamma )^{+} V^T](\Sinvhalf \otimes \Sinvhalf).
\end{align*}
\end{lemma}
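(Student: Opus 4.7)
The plan is to start from Theorem \ref{thm:R.conv}, which already gives the joint asymptotic normality of $\sqrt{n}\,\vectorize(R_n - R)$, and then push it through two linear maps: the eigenvector functional $R\mapsto \vb$ and the multiplication $\vb\mapsto \Sinvhalf\vb$. Both are smooth near $R$ and $\Sigma$ under Conditions \ref{cond:sigma.eigenvalue}--\ref{cond:R.eigenvalue}, so in principle the proof is a two-step delta method; the work is in identifying the derivatives and writing them in the Kronecker form that produces $M_{b1},M_{b2},M_{a1},M_{a2}$.

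First I would establish the eigenvector perturbation. Since $R$ is symmetric with eigendecomposition $R = V\Gamma V^T$ and eigenvalues are distinct and separated from $d_r$ (Condition \ref{cond:R.eigenvalue}), the standard first-order analytic perturbation formula gives
\begin{equation*}
\hat\vb - \vb \;=\; \sum_{j\neq r}\frac{v_j^T(R_n - R)\vb}{d_r - d_j}\,v_j + o_P(n^{-1/2})
\;=\; V(d_r I - \Gamma)^{+} V^T (R_n - R)\vb + o_P(n^{-1/2}),
\end{equation*}
with the convention $(d_r-d_r)^{+}=0$ that justifies the pseudo-inverse. Using the identity $\vectorize(AXB) = (B^T\otimes A)\vectorize(X)$, this rewrites as $\sqrt{n}(\hat\vb - \vb) = [\vb^T\otimes V(d_r I-\Gamma)^{+}V^T]\,\sqrt{n}\,\vectorize(R_n - R) + o_P(1)$. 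Combining with Theorem \ref{thm:R.conv} and $\Pi_R = (\Omega_1,\Omega_2)\Phi(\Omega_1,\Omega_2)^T$, the asymptotic covariance becomes $[\vb^T\otimes V(d_rI-\Gamma)^{+}V^T](\Omega_1,\Omega_2)\Phi(\Omega_1,\Omega_2)^T[\cdots]^T$, and absorbing the left factor into each $\Omega_i$ yields exactly $M_{b1}$ and $M_{b2}$.

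For $\hat\va = \Sigma_n^{-1/2}\hat\vb$ I would split
\begin{equation*}
\hat\va - \va \;=\; (\Sigma_n^{-1/2} - \Sinvhalf)\vb \;+\; \Sinvhalf(\hat\vb - \vb) \;+\; (\Sigma_n^{-1/2}-\Sinvhalf)(\hat\vb-\vb),
\end{equation*}
where the third term is $o_P(n^{-1/2})$. The second term contributes $\Sinvhalf$ times the already-derived expansion, giving the $\Sinvhalf[\vb^T\otimes V(d_rI-\Gamma)^{+}V^T]\Omega_i$ pieces in $M_{a1}$ and $M_{a2}$. For the first term I would derive the differential of $\Sigma^{-1/2}$: writing $Y=\Sinvhalf$ so $Y\Sigma Y = I$, differentiation and vectorization give $\vectorize(dY) = -(\St\otimes\Shalf + \Shalf\otimes\St)^{-1}\vectorize(d\Sigma)$ (one checks this algebraic identity against $(\Sinvhalf\otimes\Sinvhalf)(\Shalf\otimes I + I\otimes\Shalf)^{-1}$, which is also how $\Omega_1$ was assembled). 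Consequently $(\Sigma_n^{-1/2}-\Sinvhalf)\vb = -(\vb^T\otimes I)(\St\otimes\Shalf+\Shalf\otimes\St)^{-1}\vectorize(\Sigma_n-\Sigma) + o_P(n^{-1/2})$, which produces the extra first summand of $M_{a1}$.

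Finally I would collect the coefficients of $\sqrt{n}\,\vectorize(\Sigma_n - \Sigma)$ and $\sqrt{n}\,\vectorize(\Lambda_n - \Lambda)$ separately to present $\sqrt{n}(\hat\va - \va)$ as $(M_{a1}, M_{a2})$ times the stacked vectorized deviations, and apply the joint CLT for $(\vectorize(\Sigma_n),\vectorize(\Lambda_n))$ that underlies Theorem \ref{thm:R.conv} (this is where Condition \ref{cond:wdis} on the $4(m-1)$th moments is used, via the central limit theorem on the sample averages of $N^{(k_1)}_{\ell_1,m}(w_{ik_1})N^{(k_2)}_{\ell_2,m}(w_{ik_2})$). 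The main obstacle I anticipate is purely algebraic rather than probabilistic: correctly matching the two differentials of $\Sigma\mapsto\Sinvhalf$ and the Kronecker-absorption step so that the left factor $[\vb^T\otimes V(d_rI-\Gamma)^{+}V^T]$ distributes through $\Omega_1$ to reproduce the exact sign and ordering in the stated $M_{b1}, M_{a1}$; once that is handled, Slutsky combined with the continuous-mapping / delta method delivers the two asymptotic normal statements.
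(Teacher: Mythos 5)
Your proposal is correct and follows essentially the same route as the paper's proof: the paper phrases the argument via influence functions (von Mises expansions) and cites a Kollo--von Rosen-type perturbation lemma where you write the eigenvector expansion as an explicit spectral sum, but the substance --- the first-order formula $[\vb^T\otimes V(d_rI-\Gamma)^{+}V^T]\vectorize(R_n-R)$, the differential of $\Sigma\mapsto\Sigma^{-1/2}$ obtained from $\Sigma^{-1/2}\Sigma\Sigma^{-1/2}=I$, the decomposition of $\hat\va-\va$ into the $(\Sigma_n^{-1/2}-\Sigma^{-1/2})\vb$ and $\Sigma^{-1/2}(\hat\vb-\vb)$ contributions with a negligible cross term, and the joint CLT for $(\vectorize(\Sigma_n),\vectorize(\Lambda_n))$ with covariance $\Phi$ --- is identical. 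No gaps.
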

%----------------------------------------------------
%----------------------------------------------------
As showed in Lemma~\ref{lm:ba}, $\sqrt{n}(\hat\va -\va)$ is asymptotically normally distributed. Notice that for a given new observation denoted as $(\tilde{w}_{1}, \ldots, \tilde{w}_{K})$, its B-mean estimate is represented by an average of $\N_1^T(\tilde{w}_{1}) \hat\va_1, \ldots, \N^T_K(\tilde{w}_{K}) \hat\va_K$, where $\N_1^T(\tilde{w}_{1}), \ldots, \N^T_K(\tilde{w}_{K})$ are the values of basis functions evaluated at $(\tilde{w}_{1}, \ldots, \tilde{w}_{K})$.
With Lemma~\ref{lm:ba}, we show in Theorem~\ref{thm:normality} that the B-mean estimate follows asymptotic normal distribution.
%----------------------------------------------------
%----------------------------------------------------
\begin{theorem}\label{thm:normality}
  Suppose that conditions \ref{cond:wdis} through \ref{cond:R.eigenvalue} are satisfied. For a given new observation $\tilde{w}=(\tilde{w}_{1}, \ldots, \tilde{w}_{K})^T$, let $\mu_B( \tilde{w})$ be the B-mean at $\tilde w$, and let $\hat \mu_B( \tilde{w})$ be the B-mean estimate. Then, as $n \rightarrow \infty$,
%   denote $\E[\hat \mu_B(\tilde{w}_{1}\ldots, \tilde{w}_{K})] = \mu_B( \tilde{w}_{1}\ldots, \tilde{w}_{K})$. Together with Theorem (\ref{thm:R.conv}) and Lemma (\ref{lm:ba}),
  \begin{eqnarray}
    \sqrt{n} [\hat \mu_B( \tilde{w}) - \mu_B(\tilde{w})]
    \xrightarrow{D} N(\mathbf 0, \sigma^2_{\mu}),
  \end{eqnarray}
where 
  \begin{align*}
  \sigma^2_{\mu} = \N(\tilde w)^T (M_{a1},M_{a2})\Phi (M_{a1},M_{a2})^T \N(\tilde w)/K^2,
  \end{align*}
with $\N(\tilde w)$ being $( \N^T_1(\tilde{w}_1), \ldots, \N^T_K(\tilde w_K) )^T$.
\end{theorem}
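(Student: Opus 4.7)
The plan is to reduce Theorem \ref{thm:normality} to a direct linear-functional application of Lemma \ref{lm:ba}, combined with Slutsky's theorem / the continuous mapping theorem. The whole point is that, once one has an asymptotic distribution for $\sqrt{n}(\hat\va - \va)$, the difference $\hat\mu_B(\tilde w) - \mu_B(\tilde w)$ is an exactly linear functional of $\hat\va - \va$, so the CLT transfers without any further delta-method expansion.

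First I would rewrite the quantity of interest as such a linear functional. Let $\tilde{\matN} = \text{diag}(\N_1(\tilde w_1), \ldots, \N_K(\tilde w_K))$. Because $\tilde{\matN}$ is block-diagonal with the column vectors $\N_k(\tilde w_k)$ as its blocks, the product $\tilde{\matN}\one$ simply stacks these vectors, yielding the algebraic identity
\[
\tilde{\matN}\one \;=\; \bigl(\N_1^T(\tilde w_1), \ldots, \N_K^T(\tilde w_K)\bigr)^T \;=\; \N(\tilde w).
\]
Combining this with the definitions $\mu_B(\tilde w) = \va^T \tilde{\matN}\one/K$ and $\hat\mu_B(\tilde w) = \hat\va^T \tilde{\matN}\one/K$ gives
\[
\sqrt{n}\bigl[\hat\mu_B(\tilde w) - \mu_B(\tilde w)\bigr] \;=\; \frac{1}{K}\,\N(\tilde w)^T \sqrt{n}(\hat\va - \va).
\]

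Next I would treat $\tilde w$ as a fixed new point (or, if $\tilde w$ is random and independent of $\{\w_i\}_{i=1}^n$, condition on it), so that $\N(\tilde w)/K$ is a deterministic constant vector. Lemma \ref{lm:ba} already provides
\[
\sqrt{n}(\hat\va - \va) \xrightarrow{D} N\bigl(\mathbf 0,\, (M_{a1}, M_{a2})\Phi (M_{a1}, M_{a2})^T\bigr),
\]
and applying the continuous mapping theorem to the linear map $v \mapsto \N(\tilde w)^T v/K$ immediately yields
\[
\sqrt{n}\bigl[\hat\mu_B(\tilde w) - \mu_B(\tilde w)\bigr] \xrightarrow{D} N(0, \sigma^2_\mu),
\]
with $\sigma^2_\mu = \N(\tilde w)^T (M_{a1}, M_{a2})\Phi (M_{a1}, M_{a2})^T \N(\tilde w)/K^2$, exactly as stated.

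Because all the substantive asymptotic analysis -- the multivariate CLT giving $\Pi_R$ in Theorem \ref{thm:R.conv} and the perturbation expansion of the smallest-eigenvector functional of $R_n$ embedded in Lemma \ref{lm:ba} -- has already been carried out upstream, there is no real obstacle remaining. The only item needing explicit verification is the block-diagonal identity $\tilde{\matN}\one = \N(\tilde w)$; everything else is a one-line consequence. If one wanted to be fully rigorous about a random $\tilde w$, the mildest care is to observe that independence of $\tilde w$ from the training sample lets one condition on $\tilde w$ and then integrate, but for a fixed query point this consideration does not even arise.
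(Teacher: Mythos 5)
Your proposal is correct and follows essentially the same route as the paper: both write $\sqrt{n}[\hat\mu_B(\tilde w)-\mu_B(\tilde w)] = (\hat\va-\va)^T\matN(\tilde w)\one/K$ with $\matN(\tilde w)$ a fixed matrix, invoke the asymptotic normality of $\hat\va$ from Lemma \ref{lm:ba}, and use the identity $\matN(\tilde w)\one = \N(\tilde w)$ to obtain the stated variance. No gaps.
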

%----------------------------------------------------
%----------------------------------------------------

%--------------------------------------------------
%--------------------------------------------------
\section{Simulation studies}\label{sec:sim}
We have conducted comprehensive simulation studies to investigate the empirical performance of the proposed B-mean for representing the underlying latent variable. In the first example, we discussed some implementation issues of the B-scaling method. In the second example, we examined the performance of the B-mean by comparing with that of principal components (PCs) and multidimensional scaling (MDS). Their performances were evaluated by their correlations with the latent measurement. For principal component analysis, we reported the maximum correlations that all the PCs can achieve (PC$_{\max}$). For multidimensional scaling, we choose to map all the data points to a one-dimensional space.

\begin{example}
The selection of basis function, its order, and the knots are key issues in implementing the B-scaling algorithm. %As the order of the spline gets larger, the fit to the data gets better. However, we run the risk of overfitting. When $m$ is small, we may miss important features of the smooth function to be estimated. 
In practice, we choose the natural spline of order $m=4$. The flexibility of a spline then mainly lies in the selection of knots. %Intuitively, we place more breakpoints on regions where the data has more complex variation and fewer where the data exhibits mildly nonlinear. 
We let knots be the quantiles of $(w_{1k}, \ldots, w_{nk})$ for $k=1,\ldots,K$. %Then, the number of basis is $m+k_0-1$. 
In this paper, we focus on the B-scaling method itself and choose the same values of $m$ and $k_0$ for all $K$ different measurements. Then for different $k=1,\ldots,K$, the flexibility of fitting splines only lies in the difference of their quantiles. %in order to balance the efficiency and effectiveness of B-scaling algorithm. 
For further studies, we can embed into our algorithm the data-driven methods for fitting splines \citep{he2001data,yuan2013adaptive}. 

In this example, we demonstrate how the number of knots impacts the behavior of the B-scaling method. To generate multiple nonlinear functions automatically, we consider the Logit function 
\begin{equation*}
    g(x)=\frac{1}{1+ \text{exp}[20(x-0.5)] }
\end{equation*}
and let $(w_{i1},\ldots,w_{iK})$ measure $y_i$ in nonlinear ways,
$w_{ik} = \sum_{t=1}^H s_k Z_{kt} \delta_t g(y_i+\epsilon_{ik})$ for $i=1,\ldots, n$ and $k=1,\ldots,K$. We generated $y_i$ from $U(0,1)$ as the latent measurement and the errors $\epsilon_{ik}$ from $N(0,0.1)$. In addition, $s_k$ is a scale parameter generated from $U(-10,10)$, $Z_{kt}$ from uniform distribution $U(-\sqrt{3}, \sqrt{3})$, and $\delta_t = (-1)^{t+1} t^{-\nu/2}$. Parameters $\nu$ and $H$ were set to $2$ and $5$ respectively. Let $n=1000, 2000, 3000$, $K=10, 20, 30$, and $k_0$ vary from $11$ to $25$. The performances of B-scaling mean were evaluated by its correlation with $y_i$. As illustrated in Figure~\ref{fig:varknot}, under different scenarios the estimated B-means maintained high correlations with $y_i$ as we vary the number of knots. In practice, we recommend to utilize the B-variance to select a proper number of knots. That is, we propose to choose the number of knots that generates a smaller B-variance. Intuitively, a proper number of knots would result in a better estimation of the $K$ transformations $f_1,\ldots,f_K$ and further a smaller B-variance among $f_1(W_1),\ldots,f_K(W_K)$. We have implemented this method for selecting the number of knots in simulation studies. \red{In Supplementary Material, we further discussed the possibility of combining the plot of B-variance and the associated eigensystem to determine the number of knots (Supplementary Material, Figure S1) for future studies.}%\red{If B-variance selects a large number of knots and a group of smallest eigenvalues are inseparable, we propose to monitor both the B-variance and the scree plot of the eigenvalues as $k_0$ increases. A further discussion is collected in Supplementary Material.} %In practice, we recommend to choose $k_0$ around $15$ to $30$.
We also reported the computation time of B-scaling in Figure \ref{fig:cputime}. It only takes seconds to calculate the B-mean when the sample size and the number of measurements are relatively small (Figure \ref{fig:cputime}, left panel). As the sample size $n$, number of measurements $K$, and the number of knots increase, the computation time of B-scaling tends to increase (Figure \ref{fig:cputime}).

\begin{figure}[H]
  \centering
  \includegraphics[width=1\textwidth]{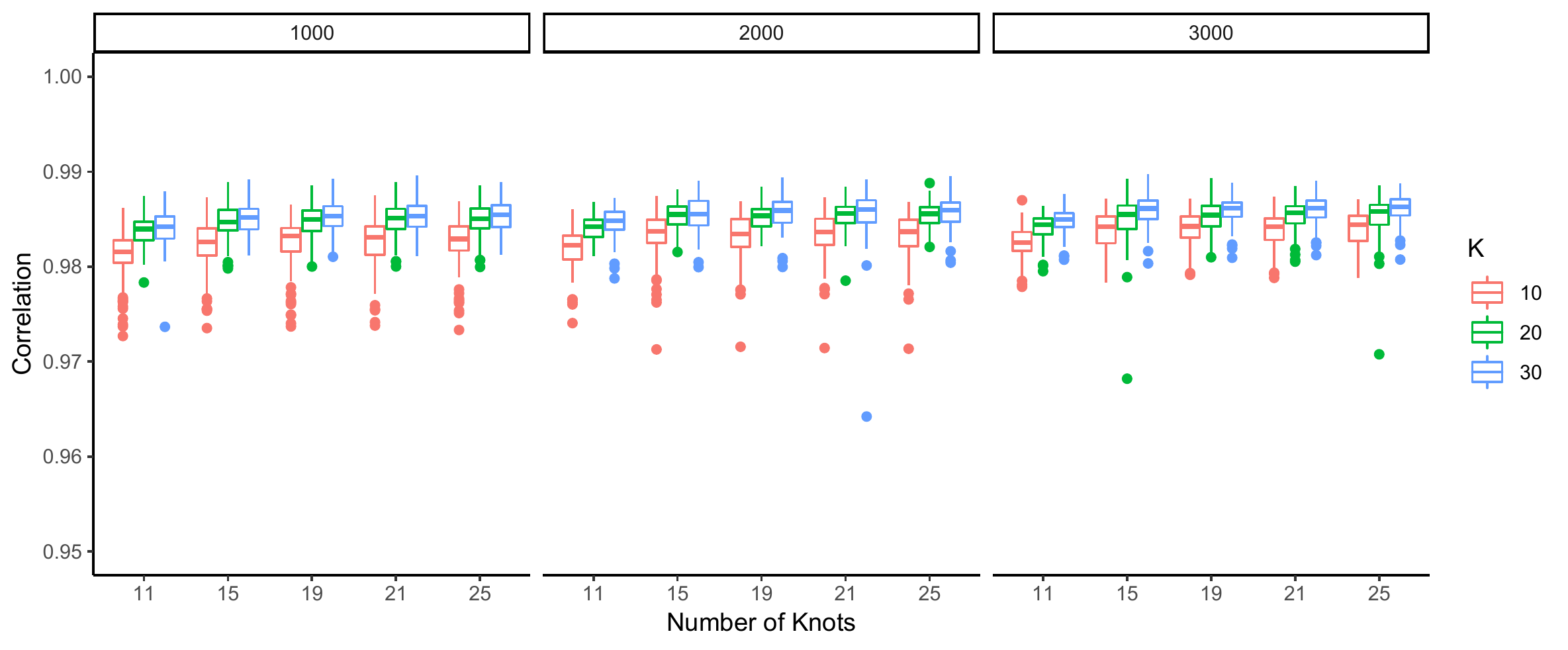}
  \caption{Boxplots of the correlations between the latent variable $y_i$ and estimated B-means.}
  \label{fig:varknot}
\end{figure}

\begin{figure}[H]
  \centering
  \includegraphics[width=1\textwidth]{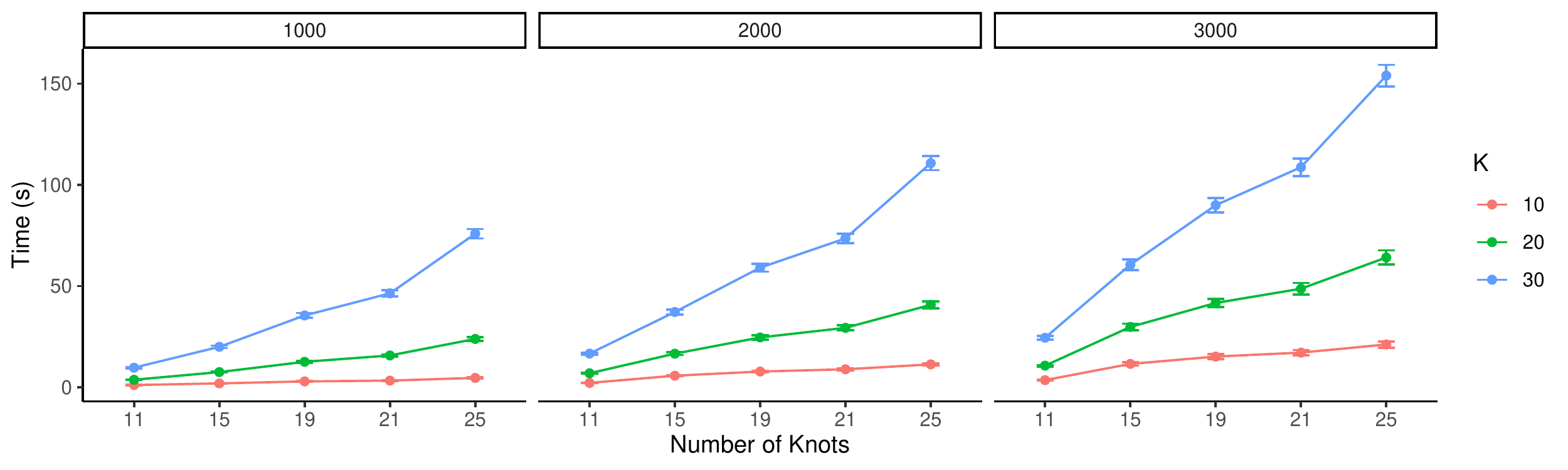}
  \caption{The computation time of B-scaling method with varying $n, K$, and the number of knots. The error bar represents one standard deviation away from the mean.}
  \label{fig:cputime}
\end{figure}

\end{example}

\begin{example} 
In this example, we investigated the performances of the B-mean for representing the latent variable. 
%
% Consider that $(w_{i1},\ldots,w_{iK})$ measure $y_i$ in nonlinear ways such that,
% \begin{equation} \label{eq:sim1}
% w_{ik} = s_k \Big[ \sum_{t=1}^H Z_{kt} \delta_t g_t(y_i+\epsilon_{ik}) \Big], \text{ for } i=1,\ldots, n, k=1,\ldots,K
% \end{equation}
% where $s_k$ is a scale parameter generated from $U(-10,10)$, $Z_{kt}$ is generated from uniform distribution $U(-\sqrt{3}, \sqrt{3})$, $\delta_t = (-1)^{t+1} t^{-\nu/2}$, and $g_t(x)=\log(t/x + 1)$. Parameter $\nu$ and $H$ were set to $2$ and $5$ respectively. 
We generated $y_i$ as the latent measurement under two scenarios: (1) uniform distribution $U(0,1)$ and (2) normal distribution $N(0,1)$. The $K$ observed measurements $(w_{i1},\ldots, w_{iK})$ as for $i=1, \ldots, n$ were simulated using the same model in Example 1. Let $n=(500,700,1000, 2000, 3000)$ and $K=(7,10, 20, 30)$. We generated $100$ datasets for each combination of $n$ and $K$. To evaluate the performances of the B-mean and PCs, their correlations with $y_i$ in each simulation are reported. The quantities $$\rho_{\max}=\max_k|\text{Corr}(w_{ik}, y_i )|, \quad \bar \rho_0 = K^{-1}\sum_{k=1}^K |\text{Corr}(w_{ik}, y_i)|$$ are also recorded.
Figure \ref{fig:example2} displays the boxplots of the correlations between the latent variable $y_i$ and the aforementioned estimates based on $100$ simulation runs.  The performances of $\rho_{\max}$ and $\bar{\rho}_0$ illustrate that $w_{ik}$s have relatively high correlations with $y_i$ on average. For principal component analysis, we reported the maximum correlations that all the PCs can achieve. %The principal component (PC) captures the most abundant information from the data. 
However, since the PC is a linear combination of the observed measurements $(w_{i1},\ldots, w_{iK})$, its correlation with $y_i$ thus heavily depends on whether $w_{ik}$ is linearly correlated with $y_i$. In all scenarios, the B-mean has the highest correlation on average with the latent variable $y_i$. 
%However, only B-mean can provide a surrogate measure that has a higher correlation than that of any individual data source. The mean and median of $w_{i1},\ldots,w_{iK}$ do not show any improvement. Because $w_{i1}, \ldots, w_{iK}$ are either positive or negative correlated with the latent variable $y_i$, simply taking mean or median of them removes all the information and is not able to capture the relationship between $w_{ik}$ and $y_i$.  

\begin{figure}[H]
  \centering
  \includegraphics[width=1\textwidth]{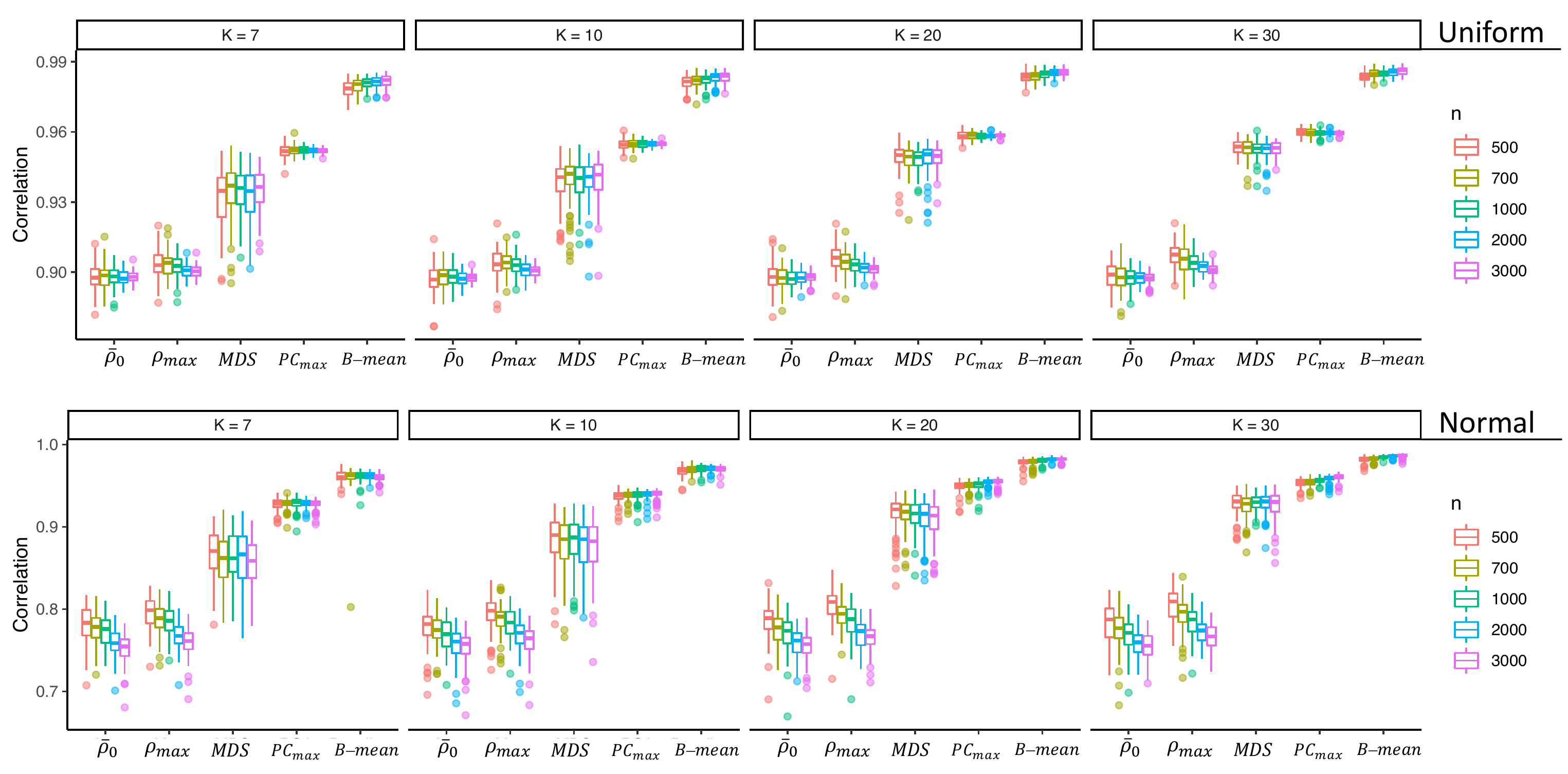}
  \caption{Boxplots of the correlations between the latent variable $y_i$ and the aforementioned estimates in example 2. The latent variable is generated from uniform distribution $U(0,1)$ (upper panel) and standard normal distribution $N(0,1)$ (lower panel).}%We let the number of knots be $15$ and $26$ under the two scenarios respectively.}
  \label{fig:example2}
\end{figure}

\end{example}
%--------------------------------------------------
%--------------------------------------------------

\begin{example}
To show the performance of the B-scaling method when $g$ functions have different structures, we consider
\begin{equation*}
    g(x)=\frac{1}{1+ \text{exp}[20(x-0.5)] } \text{ and } g_t(x)=\log(|t/x|),
\end{equation*}
and let $(w_{i1},\ldots,w_{iK})$ measure $y_i$ in nonlinear ways, where $w_{ik} = \sum_{t=1}^H s_k Z_{kt} \delta_t g(y_i+\epsilon_{ik})$ for $k=1,\ldots, \lceil K/2 \rceil$ and $w_{ik} = \sum_{t=1}^H s_k Z_{kt} \delta_t g_t(y_i+\epsilon_{ik})$ for $k=\lceil K/2 \rceil + 1, \ldots, K$. Other settings are the same as Example 2. In all scenarios, B-scaling method outperforms its competitors (Figure \ref{fig:example3}). \red{We further increased the variance of $\epsilon_{ik}$ to 0.3 and generated data pairs with the same settings in Example 2 and 3, respectively. The average correlation $\bar{\rho}_0$ then decreased to a value around $0.65$ in the uniform setting (Supplementary Material Figure S2, A and C) and around $0.4$ in the normal setting (Supplementary Material Figure S2, B and D). Our method still achieved averaged correlations around $0.9$ and $0.7$ under each setting, respectively. } 
\begin{figure}[h]
    \centering
    \includegraphics[width=1\textwidth]{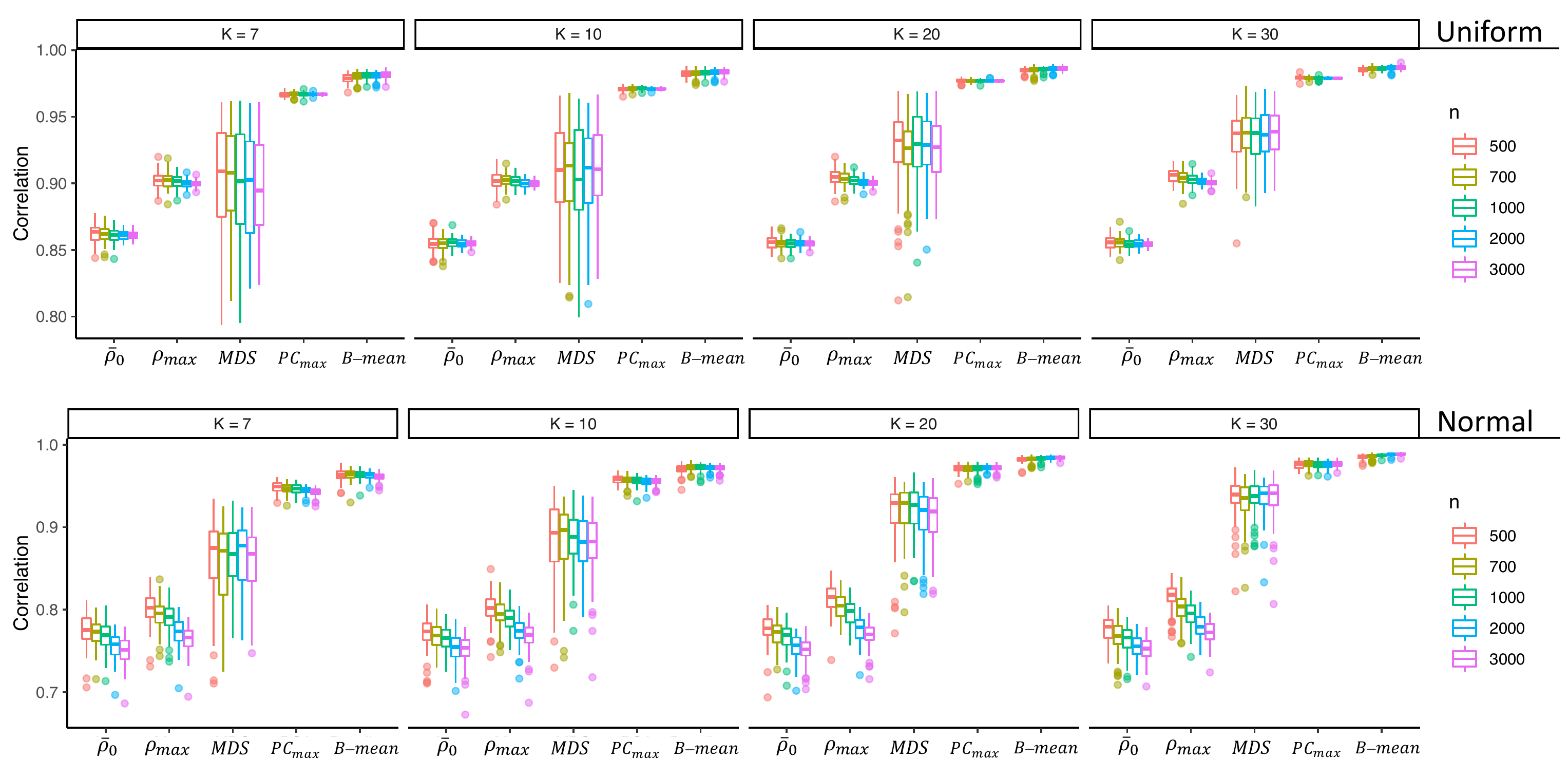}
    \caption{Performance comparisons between B-mean and its competitors in example 3. The latent variable is generated from uniform distribution $U(0,1)$ (upper panel) and standard normal distribution $N(0,1)$ (lower panel).}
    \label{fig:example3}
\end{figure}

\end{example}

%--------------------------------------------------
%--------------------------------------------------

%--------------------------------------------------
%--------------------------------------------------
\section{Application: Epigenetic Index (EI)} \label{sec:app}

Epigenetic modifications, such as DNA methylation and histone modifications, play an important role in regulating gene expression and numerous cellular processes \citep{portela2010epigenetic}. The modifications target on genetic materials in cells and regulate the biological processes without altering the DNA sequence. Recently, next generation high-throughput sequencing technologies have been adopted to characterize the epigenomic landscapes of primary human tissues and cells \citep{reuter2015high}. The resulting datasets, for instance, the Roadmap Epigenomics database, have provided information on various types of histone modifications and DNA methylation levels across the genome. The integration of these different measures of epigenetic information, also known as the construction of chromatin state maps \citep{ernst2012chromhmm}, for different tissues and cell types under varied conditions can help researchers understand biological processes such as morphogenesis, cell differentiation, and tumorigenesis \citep{sheaffer2014dna, feinberg2016epigenetic}. In this application, the epigenetic activeness is latent and cannot be measured directly. We intend to integrate various types of histone modifications and DNA methylation levels to infer epigenetic activeness across the genome.

We used the human liver cancer cell line data \citep{encode2012integrated}. 
%We compiled a dataset originates from HepG2 \citep{encode2012integrated}, a human liver cancer cell line. 
The data consists of the information of six types of histone modifications and whole genome DNA methylation levels measured by ChIP-seq and WGBS respectively. These six histone modifications, i.e., methylation and acetylation to histone proteins H3 at different amino acid positions, include H3K9ac, H3K27ac, H3K4me1, H3K4me2, H3K9me3, and H3K27me3. We used the values of fold change over control for histone modifications and DNA methylation levels on the promoter region, that is, $1000$ base pairs upstream or downstream of where a gene's transcription begins. After filtering the genes with zero histone modification and DNA methylation levels, we selected $4,293$ genes. In summary, the $k$th observed measurement $w_{ik}$ here represents $k$th epigenetic modification level of promoter region for $i$th gene, where $k=1,\cdots,7$, and $i=1,\cdots,4,293$. We aim to integrate these seven epigenetic modifications to infer the latent epigenetic activeness $y_i$. The proposed method, PCA, and MDS were applied to the dataset. To compare the performances of these methods, we considered the following linear regression model and investigated to what extent the fused chromatin EI calculated by these two methods could explain the gene expression levels,
\begin{equation}
\label{eq:real}
\log(g_i) = \alpha_0 + \alpha_1 x_{i} + \epsilon_i,
\end{equation}
where $g_i$ is the expression level (reads per kilobase million, RPKM)  for gene $i$, $i=1,\cdots 4,293$, and $x_{i}$ is the fused EI values corresponding to $i$th gene. 
%To filter the genes with low histone modification and DNA methylation levels, $4,293$ genes with mean levels above $5$ were selected.
The adjusted $\text{R}^2$s of the model (\ref{eq:real}) for B-mean, PC$_{\max}$, MDS were $0.321$, $0.214$, and $0.126$ respectively, and the full model with all the predictors, i.e., six histone modifications and DNA methylation, had adjusted $\text{R}^2$ of $0.304$ (Table \ref{tab:real}). In addition, the maximum $\text{R}^2$ was $0.242$ if we only included one of these seven predictors in the model. The B-mean achieves the highest $\text{R}^2$, indicating that the EI fused using B-scaling approach can explain better the changes in gene expression. Since we used the whole genome data and gene expression was influenced by many other factors, the $\text{R}^2$ around $0.3$ was expected \citep{yuan2006statistical}. 
\begin{table}[h]
\centering
\caption{Model performance (adjusted $\text{R}^2$) for different methods}
\label{tab:real}
\begin{tabular}{@{}llllll@{}}
\toprule
Predictor & B-mean & PC$_{\max}$  & MDS & Single   & All \\ \midrule
$\text{Adj.R}^2$ & \bf{0.321}     & 0.214 & 0.126 & 0.242 & 0.304  \\ \bottomrule
\end{tabular}
\end{table}

To further verify the performance of the proposed method, we performed the gene ontology analysis. Through Livercome database \citep{lee2011liverome}, a database for liver cancer genes, we downloaded $3,660$ cancer-related genes. Among these cancer genes, $768$ genes were shown in our selected gene list of size $4,293$. We then divided our gene list into ten subsets based on ten quantiles, which were equally spaced from zero to one hundred, of B-mean or PC values. As PC$_{\max}$ outperforms MDS, we only compared the proposed method with PC$_{\max}$. In Figure \ref{fig:real}, we showed how the number of cancer-related genes from Livercome database in each subset changed as the B-mean (left panel) or PC$_{\max}$ (right panel) values increased. The proposed method showed clear advantage over PC$_{\max}$. When the B-mean values grew, the number of cancer-related genes also increased. This indicated that we could find more cancer-related genes when the genes had higher B-mean values, i.e., higher EI. The Pearson correlation coefficients in Figure \ref{fig:real} also indicated that the B-mean values and the number of liver cancer genes had a stronger linear trend compared to the PCA approach.   

\begin{figure}[H]
  \centering
  \includegraphics[width=.7\textwidth]{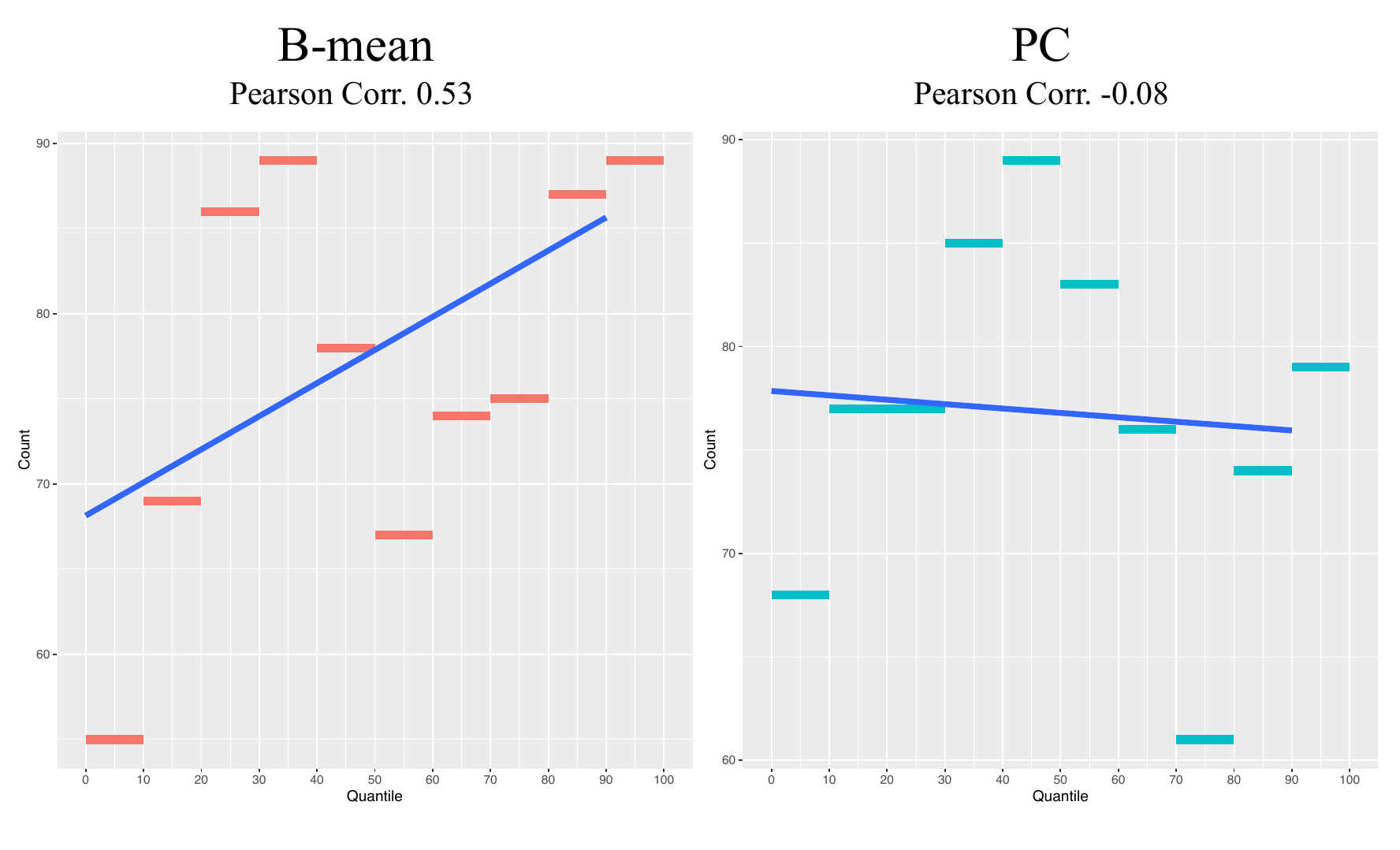}
  \caption{The relationship between the number of liver cancer-related genes and quantiles of B-mean (left panel) or PC$_{\max}$ (right panel) values. The y-axis is the number of liver cancer-related genes, and the x-axis is the corresponding quantiles. The solid blue lines are fitted using the simple linear regression.}
  \label{fig:real}
\end{figure}

\section{Discussion}\label{sec:dis}

The B-scaling approach provides a flexible framework for integrating multi-source data. It adopts a basis expansion step to approximate the nonlinear transformations of $W_1, \ldots, W_K$, and an optimization step to search for a one-dimensional surrogate of those measurements. The basis expansion step enables the characterization of nonlinear transformations, while the optimization step, borrowing strength from the PCA approach, is both convenient and efficient to implement. 
As a trade-off, the B-scaling approach imposes a few assumptions to ensure its efficiency and effectiveness in data fusion. 
%It requires that $\f_1,\ldots,\f_K \in \sS(m, t^{(k)})$. This assumption diminishes the error of using a finite number of spline basis to approximate the curves and ensures that B-mean adopts a specific form. 
In this approach, we only consider a fixed number of basis, order, and number of knots, as stated in condition 2.2. Condition 2.2 is only a working assumption. It simplifies the optimization process and yields a nice and simple optimal solution, although such an assumption is restrictive. It is possible to relax condition 2.2 and allow the number of knots to grow with the sample size. The asymptotic properties of B-mean then may depend on the order between the number of knots and the sample size. %In this approach, we use a finite number of spline basis to approximate the curves and ensures that B-mean adopts a specific form. We could generalize this to infinite-dimensional functional spaces in further studies.
Another way to make the method fully nonparametric is to impose penalties to the functions $h_1,\ldots,h_K$. The objective function thus becomes
\begin{equation*}
L(h_1,\ldots,h_K) = \E \Big[ \sum_{k=1}^K \Big(h_k(W_k) - K^{-1}\sum_{k=1}^K h_k(W_k) \Big)^2 \Big] + \lambda \sum_{k=1}^K \int h''_k(x)^2 dx,
\end{equation*}
which may further be simplified to have the following quadratic form,
\begin{equation*}
L(h_1,\ldots,h_K) = A^T [\E(\matN Q \matN^T) + \lambda D] A ,
\end{equation*}
where $D$ is a block diagonal matrix with $k$th diagonal element be $D_k$, and $D_k$ is a $\ell$-by-$\ell$ kernel matrix that measures the roughness of $h_k$. %$=\int \Nk''(x) (\Nk''(x))^T dx$. 
For any given smoothing parameter $\lambda$, this optimization problem becomes an eigenvalue problem. These ideas deserve to be further explored in future studies.
Regarding the number of knots, it should depend on both the sample size and the true transformations between the latent variable and the observed measurements. With larger sample size and more complex transformations, we recommend to use more knots. While the suggestion for the number of knots in our manuscript is only a naive recommendation, it works well under various scenarios. In this paper, we propose to choose the number of knots with a smaller B-variance. %Intuitively, a proper number of knots would result in a better estimation of the $K$ transformations $f_1,\ldots,f_K$ and further a smaller B-variance among $f_1(W_1),\ldots,f_K(W_K)$. We have implemented this method for selecting the number of knots in simulation studies.} 

The robustness of the B-scaling method is another issue. If one of the $K$ measurements severely deviates from the true measurement, it would have an impact on the minimization process. We could use the following ideas to improve the robustness of the B-scaling method. First, the B-variance can be used to identify a potential poor measurement, because removing it would result in a dramatic decreasing in the B-variance. Thus, if we are able to obtain the information about the accuracy of each measurement, we may put weights on the measurements accordingly to down weight poor measurements. The objective function can then be set as $\E \Big[ \sum_{k=1}^K \delta_k \Big(h_k(W_k) - K^{-1}\sum_{k=1}^K h_k(W_k) \Big)^2 \Big]$, where $\delta_k$ is the weight for measurement $k$. Second, to make this optimization process more robust, we could also change the $L_2$-norm to $L_1$-norm in the objective function, which deserves further investigation.

Regarding the type of transformation functions, we choose to treat the monotone transformations as a special case for the B-scaling method rather than explicitly imposing it as a constraint. If the transformation for $W_k$ is indeed monotone, then the optimal procedure would pick up this information automatically. Since the B-scaling method is based on a minimization problem that is transformed into an eigenvalue problem, imposing an extra constraint would disrupt this simplicity and add significant complexity to the computation with only limited benefit and efficiency. For a more refined version of B-scaling, one might impose some inequality constraints on the spline coefficients to control the trend of the transformations.

%Currently, the B-scaling method could be applied to do data integration on $K$ one-dimensional random variables $W_1, \ldots, W_K$. However, it is also a quick generalization of B-scaling to integrate $p$-dimensional random vectors $W_1,\ldots, W_K$. 
Currently, we only considered $W_k (k=1,\ldots,K)$ as a univariate random variable and proposed using the B-scaling mean as a representation of the $K$ different measurements. When it comes to integrating vector-valued samples, we could generalize $W_k$ to be a $p$-dimensional random vector, meaning that $p$ features for each subject are observed in measurement $k$. Correspondingly, we may reformulate the objective function as
\begin{equation*}
L(h_1,\ldots,h_K) = \E \Big[ \sum_{k=1}^K || h_k(W_k) - K^{-1}\sum_{k=1}^K h_k(W_k) ||_2^2 \Big],
\end{equation*}
where $W_k = (W_{k1}, \cdots, W_{kp})^T \in \mathbb R^p$, and each $h_k$ is a multivariate smooth function. The multivariate smooth function can be constructed through tensor product smoothing. For instance, if $p=3$, then $h_k$ can be represented as 
\begin{equation*}
    h_k (W_k) = \sum_{ijl}\beta_{ijl}B_{1i}(W_{k1})B_{2j}(W_{k2})B_{3l}(W_{k3}),
\end{equation*}
where the $\beta_{ijl}$ are coefficients, and $B_{1i}(W_{k1})$, $B_{2j}(W_{k2})$, and $B_{3l}(W_{k3})$ are the basis functions.  
The objective function can be further simplified to have a quadratic form. Such a generalization is an interesting topic for future investigation.

In summary, we believe that the B-scaling method has a broad and important impact on applications in many areas. To facilitate the method development in this direction, we implemented the B-scaling algorithm using programming language R and is available on \href{https://github.com/yiwenstat/b-scaling.git}{GitHub}.%can be requested from the authors directly.

%------------------------------------------------
%----------------  Acknowledgment---------------
%------------------------------------------------
% \section*{Acknowledgment}
% This work is supported by U.S. National Science Foundation under grants DMS-1440037, DMS-1440038, and DMS-1438957 and by U.S. National Institute of Health under grants R01GM122080 and R01GM113242. 
%-----------------------------------------------------------------
%------------------------------------------------------------------
\begin{appendix}
\newcommand{\vep}{\varepsilon}
\section*{Proof}\label{appA}
In this appendix we prove the asymptotic results in Section \ref{sec:theoretic}. We rely on what is known as the von Mises expansion to perform this task. For a comprehensive treatment on this topic, see \cite{fernholz2012mises}. The same approach was used in \cite{li2007directional}. See also \cite{li2018linear}.

Let $\mathcal D$ be the clan of all distributions of $W$. Let $F_n$ be the empirical distribution of $W$, and let $F_0$ be the true distribution of $W$. Let $\mathcal M$ be a metric space. In our context, we take $\mathcal M$ to be the class of all matrices of given dimensions. A matrix-valued statistical functional is a mapping $T: \mathcal D \rightarrow \mathcal M$. When we use the statistic $T(F_n)$ to estimate the parameter $T(F_0)$, we have the following asymptotic result: if $T(F)$ is Fr{\'e}chet Differentiable at $F_0$, then
\begin{equation}\label{eq:expansion}
T(F_n) = T(F_0) + \E_n T^* + o_p(n^{-\half}),
\end{equation}
where $T^*$ is the influence function of $T(F_n)$ satisfying $\E T^* =0 $. Moreover, the variance matrix of $\vectorize(T^*)$ has finite entries. The expansion (\ref{eq:expansion}) is called the first-order von Mises expansion of $T(F_n)$. By the Central Limit Theorem and Slutzky's theorem, we have
\begin{equation*}
\sqrt{n} [ \vectorize(T(F_n)) - \vectorize(T(F_0))] \xrightarrow{D} N(0, \var[\vectorize(T^*)] ).
\end{equation*}

The influence function of a statistical functional is defined as follows. Let $w$ be a point in the support of the random vector $W$. Let $\delta_w$ be the Dirac measure at $w$. Then the influence function of $T(F_n)$ is defined as the derivative
\begin{equation}
T^* = \frac{d}{d\varepsilon} T[ (1-\varepsilon)F_0 + \varepsilon \delta_w]|_{\varepsilon=0}.
\end{equation}

The statistics in Section \ref{sec:theoretic} such as the $R_n$ in Theorem \ref{thm:R.conv} and $\hat\mu_B(\tilde w)$ in Theorem \ref{thm:normality} are all special cases of statistical functionals and their asymptotic distributions can all be derived in this unified fashion.

% ------------------------------------------------------
% ------------------------------------------------------
% ------------------------------------------------------
\subsection{Influence functions of $\Lambda_n, \Sn$ and $\Sninvhalf$}
\begin{proof}
The influence function of $\ldn$ is simple: recall that $\ldn = \E_n(\matN Q\matN^T)$. So we have the following expansion:
\begin{equation*}
\ldn = \ld + \E_n(\matN Q \matN^T - \ld).
\end{equation*}
The influence function of $\ldn$ is simply
\begin{equation}\label{eq:ld_star}
\ld^* = \matN Q \matN^T - \ld.
\end{equation}

We now derive the influence function of $\Sn$. Recall that
\begin{align*}
\Sn &= \var_n(\matN \one /K) \\
&= \E_n[\matN (\one \one^T/K^2) \matN^T] - \E_n(\matN \one /K)\E_n(\one^T \matN^T /K).
\end{align*}
To calculate the influence function of $\Sn$, let $G_{\varepsilon}=(1-\varepsilon)F_0 + \varepsilon\delta_w$, and
\begin{equation*}
\St(\varepsilon) = \int \matN (\one \one^T/K^2) \matN^T d G_{\varepsilon} - \int (\matN \one /K) d G_{\varepsilon} \int (\one^T \matN^T /K) d G_{\varepsilon}.
\end{equation*}
Because $\frac{d}{d\varepsilon} G_{\varepsilon} |_{\varepsilon=0} = \delta_w-F_0$, we have
\begin{align*}
\frac{d \St(\varepsilon)}{d \varepsilon} \Big|_{\vep=0} 
&= \int \matN (\one \one^T/K^2) \matN^T d(\delta_w-F_0) \\
&- [\int (\matN \one /K) d(\delta_w-F_0)]\E(\one^T \matN^T /K) - \E(\matN \one /K)[\int (\one^T \matN^T /K) d(\delta_w-F_0)].
\end{align*}
Thus, the influence function for $\Sn$ is
\begin{align}\label{eq:st_star}
\St^* &= \matN (\one \one^T/K^2) \matN^T  - \E[\matN (\one \one^T/K^2) \matN^T] \nonumber \\
&- [\matN \one /K - \E(\matN \one /K)]\E(\one^T\matN^T /K) - \E(\matN \one /K)[\one^T\matN^T /K - \E(\one^T\matN^T /K)]
\end{align}

Finally, we derive the influence function of $\Sninvhalf$. Since for all $\vep>0$, $\Sinvhalf(\vep) \St(\vep) \Sinvhalf(\vep) = I$, we have
\begin{equation*}
(\Sinvhalf)^* \Shalf + \Sinvhalf \St^* \Sinvhalf + \Shalf(\Sinvhalf)^* = 0.
\end{equation*}
Hence,
\begin{equation*}
(\Shalf \otimes I) \vectorize[(\Sinvhalf)^*] + (I \otimes \Shalf) \vectorize[(\Sinvhalf)^*] = -(\Sinvhalf \otimes \Sinvhalf) \vectorize(\St^*),
\end{equation*}
which implies
\begin{align}\label{eq:vecShalf}
\vectorize[(\Sinvhalf)^*] &= -(\Shalf \otimes I+I \otimes \Shalf)^{-1}(\Sinvhalf \otimes \Sinvhalf) \vectorize(\St^*) \nonumber \\
&= - ( \St \otimes \Shalf + \Shalf \otimes \St)^{-1} \vectorize(\St^*).
\end{align}

\end{proof}
%---------------------------------------------------------------------------
%---------------------------------------------------------------------------
%---------------------------------------------------------------------------
\subsection{Proof of Theorem~\ref{thm:R.conv}.}
\begin{proof}
Since $R_n = \Sninvhalf \ldn \Sninvhalf$, we have
\begin{equation*}
R(\vep) = \Sinvhalf(\vep) \ld(\vep) \Sinvhalf(\vep).
\end{equation*}
Differentiate both sides of this equation with respect to $\vep$, and then evaluate the derivatives at $\vep=0$, to obtain
\begin{equation*}
R^* = (\Sinvhalf)^* \ld \Sinvhalf + \Sinvhalf \ld^* \Sinvhalf + \Sinvhalf \ld (\Sinvhalf)^*.
\end{equation*}
Hence
\begin{align}\label{eq:vecR}
\vectorize(R^*)
&=(\Sinvhalf \ld \otimes I) \vectorize[(\Sinvhalf)^*] + (\Sinvhalf\otimes \Sinvhalf)\vectorize(\ld^*)+(I \otimes \Sinvhalf \ld)\vectorize[(\Sinvhalf)^*] \nonumber \\
&=(\Sinvhalf \ld \otimes I + I \otimes \Sinvhalf \ld)\vectorize[(\Sinvhalf)^*] + (\Sinvhalf\otimes \Sinvhalf)\vectorize(\ld^*) \nonumber \\
&= - (\Sinvhalf \ld \otimes I + I \otimes \Sinvhalf \ld)( \St \otimes \Shalf + \Shalf \otimes \St)^{-1} \vectorize(\St^*) + (\Sinvhalf\otimes \Sinvhalf)\vectorize(\ld^*) \nonumber \\
&= (\Omega_1, \Omega_2)\begin{pmatrix} \vectorize(\St^*) \\ \vectorize(\ld^*)\end{pmatrix},
\end{align}
where $\Omega_1=- (\Sinvhalf \ld \otimes I + I \otimes \Sinvhalf \ld)( \St \otimes \Shalf + \Shalf \otimes \St)^{-1} $ and $\Omega_2=(\Sinvhalf\otimes \Sinvhalf)$. It follows that
\begin{equation*}
\sqrt{n}[ \vectorize(R_n) - \vectorize(R)] \xrightarrow{D} N(0, \Pi_R),
\end{equation*}
where
\begin{equation*}
\Pi_R = (\Omega_1, \Omega_2) 
\begin{pmatrix}
\E[ \vectorize(\St^*) \vectorize^T(\St^*)] & \E[ \vectorize(\St^*) \vectorize^T(\ld^*)] \\
\E[ \vectorize(\ld^*) \vectorize^T(\St^*)] & \E[ \vectorize(\ld^*) \vectorize^T(\ld^*)]
\end{pmatrix}
\begin{pmatrix}
\Omega_1^T \\ \Omega_2^T
\end{pmatrix}.
\end{equation*}

We now compute the moments $\E[ \vectorize(\St^*) \vectorize^T(\St^*)], \E[ \vectorize(\ld^*) \vectorize^T(\St^*)], \E[ \vectorize(\ld^*) \vectorize^T(\ld^*)]$. Let $\matZ = \matN \one /K$. Then, by (\ref{eq:st_star}),
$\St^* = \matZ \matZ^T - \E(\matZ \matZ^T) - [\matZ - \E(\matZ)]\E(\matZ^T) - \E(\matZ)[\matZ - \E(\matZ)]^T.$
It follows that
$\vectorize(\St^*) = \matZ\otimes \matZ - \E(\matZ \otimes \matZ) - [\matZ - \E(\matZ)] \otimes \E(\matZ) - \E(\matZ) \otimes [\matZ - \E(\matZ)]$.
Hence
\begin{align*}
\E[ \vectorize(\St^*) \vectorize^T(\St^*)] = \E\Big\{ &\Big[\matZ \otimes \matZ - \E(\matZ \otimes \matZ) - (\matZ - \E(\matZ)) \otimes \E(\matZ) - \E(\matZ) \otimes (\matZ - \E(\matZ)) \Big] \\
& \Big[\matZ \otimes \matZ - \E(\matZ \otimes \matZ) - (\matZ - \E(\matZ)) \otimes \E(\matZ) - \E(\matZ) \otimes (\matZ - \E(\matZ)) \Big]^T \Big\}.
\end{align*}
Next, by (\ref{eq:ld_star}), $\ld^*=\matN Q \matN^T - \E(\matN Q \matN^T)$. It follows that
$\vectorize(\ld^*) = (\matN \otimes \matN) \vectorize(Q) - \E[ (\matN \otimes \matN) \vectorize(Q)].$
Hence
\begin{align*}
\E[ \vectorize(\ld^*) \vectorize^T(\St^*)] = \E \Big\{ 
& \Big[(\matN \otimes \matN) \vectorize(Q) - \E[ (\matN \otimes \matN) \vectorize(Q)]\Big] \\
& \Big[\matZ\otimes \matZ - \E(\matZ \otimes \matZ) - [\matZ - \E(\matZ)] \otimes \E(\matZ) - \E(\matZ) \otimes [\matZ - \E(\matZ)]\Big]^T\Big\}.
\end{align*}
Finally, by (\ref{eq:ld_star}) again, we have
\begin{align*}
\E[\vectorize(\ld^*) \vectorize^T(\ld^*)] = \E \Big\{ 
&\Big[(\matN \otimes \matN) \vectorize(Q) - \E[ (\matN \otimes \matN) \vectorize(Q)]\Big] \\
&\Big[ (\matN \otimes \matN) \vectorize(Q) - \E[ (\matN \otimes \matN) \vectorize(Q)]\Big]^T \Big\}.
\end{align*}
This completes the proof.
\end{proof}
%---------------------------------------------------------------------------
%---------------------------------------------------------------------------
%---------------------------------------------------------------------------

\subsection{Proof of Lemma~\ref{lm:ba}}

Before proving Lemma \ref{lm:ba}, we introduce the following notations. Let $r=K\ell$ and assume matrices $R$ and $R_n$ have eigenvalue decompositions:
  \begin{eqnarray}\label{eq:eigenR}
    R = V \Gamma V^T, \quad R_n = V_{n} \Gamma_{n}  V_{n}^T,
  \end{eqnarray}
 where $V = (\vv_{1}, \ldots, \vv_{r}) \in \mathbb R^{r \times r}$ and $\Gamma = \text{diag}(d_{1}, \ldots,d_{r}) \in \mathbb R^{r\times r}$, $V_n = (\vv_{n,1}, \ldots, \vv_{n, r}) \in \mathbb R^{r\times r}$ and $\Gamma_{n} = \text{diag}(d_{n,1}, \ldots,d_{n,r}) \in \mathbb R^{r\times r}$. Since $\vb$ and $\hat{\vb}$ are eigenvectors of $R$ and $R_n$ corresponding to their smallest eigenvalues, we have $\vb = \vv_{r}$ and $\hat \vb = \vv_{n,r}$. The following lemma is similar, but slightly stronger than the assertion 3 of Theorem 3.1.8 of \cite{kollo2006advanced}. The proof is also similar to that given in \cite{kollo2006advanced} and is omitted.
 
\begin{lemma}\label{lm:eigenR}
 Suppose $R$ and $R_n$ have eigenvalue decomposition (\ref{eq:eigenR}) and satisfy $\sqrt{n} \vectorize(R_n - R) \xrightarrow{D} N(0, \Pi_R)$ as well as condition \ref{cond:R.eigenvalue}. Then
 \begin{equation*}
 \vv_{n,i} = \vv_i + [\vv_i^T \otimes V (d_i I-\Gamma)^{+} V^T] \E_n \vectorize(R^*) + o_p(n^{-\half}),
 \end{equation*}
where $(\cdot)^{+}$ is the Moore-Penrose inverse.
\end{lemma}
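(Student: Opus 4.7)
The plan is to derive this via a classical first-order perturbation expansion for a symmetric eigenvalue problem, feeding in the von Mises expansion for $R_n - R$ that already underlies Theorem~\ref{thm:R.conv}. Write $E = R_n - R$; from that theorem and its proof, $\|E\| = O_p(n^{-1/2})$ and $\vectorize(E) = \E_n \vectorize(R^*) + o_p(n^{-1/2})$. Under condition~\ref{cond:R.eigenvalue} the eigenvalues of $R$ are simple with strictly positive pairwise gaps, so a Davis--Kahan/Bauer--Fike argument delivers the preliminary rates $d_{n,i} - d_i = O_p(n^{-1/2})$ and, after fixing the sign convention $\vv_i^T \vv_{n,i} \ge 0$, $\vv_{n,i} - \vv_i = O_p(n^{-1/2})$.

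Setting $\delta_i = \vv_{n,i} - \vv_i$ and $\eta_i = d_{n,i} - d_i$, substituting into $R_n \vv_{n,i} = d_{n,i}\vv_{n,i}$, and subtracting $R\vv_i = d_i \vv_i$ yields
\begin{equation*}
(R - d_i I)\,\delta_i = \eta_i \vv_i - E \vv_i - E \delta_i + \eta_i \delta_i.
\end{equation*}
The last two terms are products of two $O_p(n^{-1/2})$ factors and therefore $o_p(n^{-1/2})$. Expanding $\delta_i = \sum_{j=1}^{r} c_j \vv_j$ in the orthonormal eigenbasis of $R$ and matching coefficients, projection onto $\vv_i$ (using $\vv_i^T (R - d_i I) = 0$ by symmetry) identifies $\eta_i = \vv_i^T E \vv_i + O_p(n^{-1})$, while projection onto $\vv_k$ for $k \neq i$ gives $c_k = \vv_k^T E \vv_i/(d_i - d_k) + O_p(n^{-1})$. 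The coefficient $c_i$ is undetermined by the eigenequation but pinned down by $\|\vv_{n,i}\|=1$, which forces $c_i = -\tfrac12 \|\delta_i\|^2 = O_p(n^{-1})$ and is absorbed into the remainder.

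It remains to recognize the resulting sum in the matrix form stated in the lemma. Under condition~\ref{cond:R.eigenvalue} the diagonal matrix $d_i I - \Gamma$ has a single zero in the $i$-th entry, so its Moore--Penrose inverse is diagonal with entries $(d_i - d_j)^{-1}$ for $j \neq i$ and $0$ for $j = i$. Consequently $V(d_i I - \Gamma)^+ V^T = \sum_{j \neq i} (d_i - d_j)^{-1} \vv_j \vv_j^T$, so that
\begin{equation*}
\delta_i = V(d_i I - \Gamma)^+ V^T E \vv_i + o_p(n^{-1/2}).
\end{equation*}
Applying the vec identity $\vectorize(ACB) = (B^T \otimes A)\vectorize(C)$ with $A = V(d_i I - \Gamma)^+ V^T$, $B = \vv_i$, $C = E$ rewrites this as $[\vv_i^T \otimes V(d_i I - \Gamma)^+ V^T]\,\vectorize(E)$; substituting $\vectorize(E) = \E_n \vectorize(R^*) + o_p(n^{-1/2})$ yields the expansion claimed in the lemma.

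The algebra is routine; the main obstacle will be the rigorous control of the $o_p(n^{-1/2})$ remainder. Concretely, I need a quantitative perturbation bound delivering $\|\delta_i\| = O_p(n^{-1/2})$ with an explicit dependence on the spectral gap $\min_{j \neq i}|d_i - d_j|$, so that the quadratic residuals $E \delta_i$ and $\eta_i \delta_i$ are genuinely of order $n^{-1}$. Condition~\ref{cond:R.eigenvalue} guarantees this gap is bounded away from zero, but the bookkeeping requires care because eigenvectors are only defined up to sign; fixing the sign convention $\vv_i^T \vv_{n,i} \ge 0$ before linearizing is what makes the expansion unambiguous and ensures the linear term has the stated orientation.
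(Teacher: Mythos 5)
Your proposal is correct. The paper does not actually supply a proof of this lemma---it defers to Theorem 3.1.8 of Kollo and von Rosen (2006) and omits the argument---and what you have written is exactly the standard first-order perturbation expansion of a simple eigenpair of a symmetric matrix that underlies that reference: linearize the eigenequation, project onto the eigenbasis to get the off-diagonal coefficients $\vv_k^T E\vv_i/(d_i-d_k)$, use the normalization to kill the $\vv_i$-component, and repackage via the Moore--Penrose inverse and the vec identity. Your explicit attention to the sign convention $\vv_i^T\vv_{n,i}\ge 0$ and to the preliminary $O_p(n^{-1/2})$ rates needed to control the quadratic remainder is exactly the bookkeeping the paper leaves implicit, so your write-up fills in the omitted details rather than diverging from the intended route.
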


%---------------------------------------------------
%---------------------------------------------------
\begin{proof}[{\bf Proof of Lemma \ref{lm:ba}}]
Since $\hat \vb$ and $\vb$ are the $r$th eigenvectors of $R_n$ and $R$ respectively, the influence function of $\hat\vb$ is, by Lemma~\ref{lm:eigenR}, 
\begin{equation*}
\vb^* = [ \vb^T \otimes V (d_r I-\Gamma)^{+}V^T] \vectorize(R^*) \equiv \Omega_3 \vectorize(R^*).
\end{equation*}
Substitute (\ref{eq:vecR}) into the right-hand side, to obtain
\begin{equation}\label{eq:vecb}
\vb^* = (\Omega_3\Omega_1, \Omega_3\Omega_2)\begin{pmatrix}
\vectorize(\St^*) \\ \vectorize(\ld^*)
\end{pmatrix}.
\end{equation}
Hence $\sqrt{n} (\hat\vb - \vb) \xrightarrow{D} N(\mathbf 0, \Pi_{\vb})$, where $\Pi_{\vb} =   (\Omega_3\Omega_1, \Omega_3\Omega_2) \Phi  (\Omega_3\Omega_1, \Omega_3\Omega_2)^T$. The explicit form of $\Omega_3 \Omega_1$ is
\begin{equation*}
-[ \vb^T \otimes V (d_r I-\Gamma)^{+} V^T] (\Sinvhalf \ld \otimes I + I \otimes \Sinvhalf \ld)( \St \otimes \Shalf + \Shalf \otimes \St)^{-1},
\end{equation*}
and the explicit form of $\Omega_3 \Omega_2$ is
\begin{equation*}
[ \vb^T \otimes V (d_r I-\Gamma)^{+} V^T](\Sinvhalf\otimes \Sinvhalf).
\end{equation*}
This proves the first part of the lemma.

Now we turn to the asymptotic distribution of $\hat\va$. Since $\hat \va = \Sninvhalf \hat \vb$, the influence function of $\hat\va$ is 
\begin{align*}
\va^* = (\Sinvhalf)^* \vb + \Sinvhalf \vb^* = (\vb^T \otimes I) \vectorize[(\Sinvhalf)^*] + \Sinvhalf \vb^*.
\end{align*}
Substitute (\ref{eq:vecShalf}) and (\ref{eq:vecb}) into the right-hand side, to obtain
\begin{align*}
\va^* &= - (\vb^T \otimes I)( \St \otimes \Shalf + \Shalf \otimes \St)^{-1} \vectorize(\St^*) + \Sinvhalf\Omega_3\Omega_1 \vectorize(\St^*) + \Sinvhalf\Omega_3\Omega_2 \vectorize(\ld^*) \\
&= [- (\vb^T \otimes I)( \St \otimes \Shalf + \Shalf \otimes \St)^{-1} + \Sinvhalf\Omega_3\Omega_1] \vectorize(\St^*) + \Sinvhalf \Omega_3\Omega_2 \vectorize(\ld^*) \\
&\equiv \Omega_4 \vectorize(\St^*) + \Omega_5 \vectorize(\ld^*)
\end{align*}
It follows that $\sqrt{n} (\hat\va - \va) \xrightarrow{D} N(\mathbf 0, \Pi_a)$, where $\Pi_a = (\Omega_4, \Omega_5) \Phi (\Omega_4, \Omega_5)^T$,
% \begin{align*}
% \Pi_a = (\Omega_4, \Omega_5) \begin{pmatrix}
% \Phi_{11} & \Phi_{12} \\ \Phi_{21} & \Phi_{22} 
% \end{pmatrix}
% \begin{pmatrix}
% \Omega_4^T \\ \Omega_5^T
% \end{pmatrix},
% \end{align*}
% with $\Phi_{11}, \ldots, \Phi_{22}$ being given in Theorem \ref{thm:R.conv}. 
with $\Phi$ being given in Theorem \ref{thm:R.conv}. 
We now derive the explicit forms of $\Omega_4$ and $\Omega_5$. By definition,
\begin{align*}
\Omega_4 &= - (\vb^T \otimes I)( \St \otimes \Shalf + \Shalf \otimes \St)^{-1} + \Sinvhalf\Omega_3\Omega_1 \\
&= - (\vb^T \otimes I)( \St \otimes \Shalf + \Shalf \otimes \St)^{-1} \\
&- \Sinvhalf [ \vb^T \otimes V (d_r I-\Gamma)^{+} V^T] (\Sinvhalf \ld \otimes I + I \otimes \Sinvhalf \ld)( \St \otimes \Shalf + \Shalf \otimes \St)^{-1}.
\end{align*}
Also,
\begin{equation*}
\Omega_5 = \Sinvhalf [ \vb^T \otimes V (d_r I-\Gamma)^{+} V^T](\Sinvhalf\otimes \Sinvhalf).
\end{equation*}
The proof is completed.

\end{proof}

%-------------------------------------------------------
%-------------------------------------------------------
\subsection{Proof of Theorem~\ref{thm:normality}}

\begin{proof}
Recall that $\hat \mu_B(\tilde{w}) = \hat\va^T \matN(\tilde w) \one/K$, where $\matN(\tilde w)$ is the nonrandom matrix 
$$ \text{diag} ( \N_1(\tilde{w}_1), \ldots, \N_K(\tilde{w}_K) ).$$ 
Hence, $ \sqrt{n} [\hat \mu_B(\tilde{w}) - \mu_B(\tilde{w})] = (\hat\va - \va)^T \matN(\tilde w)\one/K$. By the asymptotic distribution of $\hat\va$ in Lemma \ref{lm:ba}, we have
\begin{eqnarray}
  \sqrt{n} [\hat \mu_B(\tilde{w}) - \mu_B(\tilde{w})] \xrightarrow{D} N(0, \sigma_{\mu}^2)
\end{eqnarray}
where $\sigma^2_{\mu} = \one^T \matN^T(\tilde w)(M_{a1}, M_{a2})\Phi (M_{a1}, M_{a2})^T \matN(\tilde w) \one /K^2$, $M_{a1}=\Omega_4$, and $M_{a2}=\Omega_5$.
The vector $\matN(\tilde w)\one$ is simply $\N(\tilde w)$ defined in Theorem \ref{thm:normality}.
\end{proof}

\end{appendix}

%%%%%%%%%%%%%%%%%%%%%%%%%%%%%%%%%%%%%%%%%%%%%%
%% Support information, if any,             %%
%% should be provided in the                %%
%% Acknowledgements section.                %%
%%%%%%%%%%%%%%%%%%%%%%%%%%%%%%%%%%%%%%%%%%%%%%
\begin{acks}[Acknowledgments]
The authors would like to thank the anonymous referees, an Associate
Editor and the Editor for their constructive comments that improved the
quality of this paper.
\end{acks}

%%%%%%%%%%%%%%%%%%%%%%%%%%%%%%%%%%%%%%%%%%%%%%
%% Funding information, if any,             %%
%% should be provided in the                %%
%% funding section.                         %%
%%%%%%%%%%%%%%%%%%%%%%%%%%%%%%%%%%%%%%%%%%%%%%
\begin{funding}
Zhong’s research was supported by U.S. National Science Foundation under grants DMS-1440037, DMS-1440038, and DMS-1438957 and by U.S. National Institute of Health under grants R01GM122080 and R01GM113242. Li's research was supported by U.S. National Science Foundation under grant DMS-1713078.
\end{funding}
%%%%%%%%%%%%%%%%%%%%%%%%%%%%%%%%%%%%%%%%%%%%%%
%% Supplementary Material, including data   %%
%% sets and code, should be provided in     %%
%% {supplement} environment with title      %%
%% and short description. It cannot be      %%
%% available exclusively as external link.  %%
%% All Supplementary Material must be       %%
%% available to the reader on Project       %%
%% Euclid with the published article.       %%
%%%%%%%%%%%%%%%%%%%%%%%%%%%%%%%%%%%%%%%%%%%%%%
\begin{supplement}
\stitle{Supplementary Material for ``B-scaling: a novel nonparametric data fusion method"}
\sdescription{We discuss the implementation issues of the B-scaling method and report results from additional simulation studies.}
\end{supplement}

% -------------------------------------------------
% -------------------------------------------------
%	Reference
% -------------------------------------------------
% -------------------------------------------------
\bibliographystyle{chicago}

\bibliography{bscaling}

\end{document}